\documentclass[aps,onecolumn,a4paper,superscriptaddress,longbibliography,nofootinbib,notitlepage]{revtex4-1}

\usepackage{lipsum}
\usepackage{setspace}

\usepackage{graphicx}
\usepackage{amsmath}
\usepackage{amssymb}
\usepackage{amsthm}
\usepackage[utf8]{inputenc}
\usepackage{comment}

\usepackage{color}
\usepackage[dvipsnames]{xcolor}

\newcommand{\norm}[1]{\left\lVert#1\right\rVert} % norm: double vertical bars

\usepackage[a4paper,total={6.5in,9in}]{geometry}
\usepackage[english]{babel}
\usepackage[T1]{fontenc}
\usepackage{mathrsfs}
\usepackage{braket}
\usepackage{physics}
\usepackage{enumerate}
\usepackage{graphicx}
\usepackage{mathtools}
\usepackage[export]{adjustbox}
\usepackage{microtype}

\theoremstyle{plain}

\usepackage[colorlinks = true,
            linkcolor = black,
            urlcolor  = blue,
            citecolor = blue,
            anchorcolor = black]{hyperref}

\newtheorem{theorem}{Theorem}
\newtheorem{corollary}[theorem]{Corollary}
\newtheorem{proposition}[theorem]{Proposition}
\newtheorem{lemma}[theorem]{Lemma}
 
\newtheorem{definition}[theorem]{Definition}  
  
\newtheorem*{remark*}{Remark}   %  \newtheorem*{remark}{Remark}  * is for numbering
\renewcommand\qedsymbol{$\blacksquare$}
\newenvironment{proof-of}[1][{\hspace{-\blank}}]{{\medskip\noindent\textit{Proof~{#1}.\ }}}{\hfill\qedsymbol}

\renewcommand{\Tr}{{\operatorname{Tr}\,}}

\renewcommand{\dim}{{\operatorname{dim}}}
\newcommand{\id}{{\operatorname{id}}}

\newcommand{\1}{\openone}

\newcommand{\proj}[1]{|#1\rangle\!\langle #1|}
\newcommand{\cD}{{\mathcal{D}}}

\newcommand{\Sym}{{\textnormal{Sym}}}

\newcommand{\CB}{{\textnormal{CB}}}

\newcommand{\nc}{\newcommand}
\nc{\rnc}{\renewcommand}

\nc{\avg}[1]{\langle#1\rangle}
\nc{\Rank}{\operatorname{Rank}}
\nc{\smfrac}[2]{\mbox{$\frac{#1}{#2}$}}

\nc{\ox}{\otimes}
\nc{\dg}{\dagger}
\nc{\da}{\downarrow}
\nc{\ua}{\uparrow}
\nc{\cA}{{\cal A}}
\nc{\cB}{{\cal B}}
\nc{\cC}{{\cal C}}
\nc{\cF}{{\cal F}}
\nc{\cG}{{\cal G}}
\nc{\cH}{{\cal H}}
\nc{\cI}{{\cal I}}
\nc{\cJ}{{\cal J}}
\nc{\cK}{{\cal K}}
\nc{\cL}{{\cal L}}
\nc{\cM}{{\cal M}}
\nc{\cN}{{\cal N}}
\nc{\cO}{{\cal O}}
\nc{\cP}{{\cal P}}
\nc{\cQ}{{\cal Q}}
\nc{\cR}{{\cal R}}
\nc{\cS}{{\cal S}}
\nc{\cX}{{\cal X}}
\nc{\cY}{{\cal Y}}
\nc{\cW}{{\cal W}}
\nc{\cZ}{{\cal Z}}
\nc{\SWAP}{{\mathrm{SWAP}}}
\nc{\csupp}{{\operatorname{csupp}}}
\nc{\qsupp}{{\operatorname{qsupp}}}
\nc{\RR}{{{\mathbb R}}}
\nc{\CC}{{{\mathbb C}}}
\nc{\FF}{{{\mathbb F}}}
\nc{\NN}{{{\mathbb N}}}
\nc{\ZZ}{{{\mathbb Z}}}
\nc{\PP}{{{\mathbb P}}}
\nc{\QQ}{{{\mathbb Q}}}
\nc{\UU}{{{\mathbb U}}}
\nc{\EE}{{{\mathbb E}}}
\nc{\sdH}{{\widetilde{H}^{\downarrow}}}
\nc{\suH}{{\widetilde{H}^{\uparrow}}}
\nc{\ChN}{\Gamma_{\cN^{\otimes n}}^{{A'}^nB^n}}
\nc{\ChM}{\Gamma_{\cM}^{A'B}}
\nc{\wtD}{{\widetilde{D}}}
\nc{\wtQ}{{\widetilde{Q}}}
\nc{\pdH}{{\bar{H}^{\downarrow}}}
\nc{\puH}{{\bar{H}^{\uparrow}}}
\nc{\cAd}{\textnormal{Ad}}

\newcommand{\mids}{\;\|\;}

\begin{document}
\raggedbottom
\clubpenalty=10000
\widowpenalty=10000
%\title{Upper Bounds on the Quantum Capacity of Symmetric Codes}
\title{From Permutation Symmetry to Communication Bounds and Additivity}

\author{Zahra Baghali Khanian}
\affiliation{Perimeter Institute for Theoretical Physics, Ontario, Canada, N2L 2Y5}
\affiliation{Institute for Quantum Computing, University of Waterloo, Ontario, Canada, N2L 3G1}

\author{Debbie Leung}
\affiliation{Department of Combinatorics and Optimization, University of Waterloo, Ontario, Canada, N2L 3G1}
\affiliation{Institute for Quantum Computing, University of Waterloo, Ontario, Canada, N2L 3G1}
\affiliation{Perimeter Institute for Theoretical Physics, Ontario, Canada, N2L 2Y5}
\author{Graeme Smith}
\affiliation{Department of Applied Mathematics, University of Waterloo, Ontario, Canada, N2L 3G1}
\affiliation{Institute for Quantum Computing, University of Waterloo, Ontario, Canada, N2L 3G1}

\begin{abstract}
Correlations across channel uses can improve quantum communication rates,
making optimization over arbitrarily large blocks a central difficulty
in determining quantum capacity. We show how full permutation invariance
limits this advantage for every finite-dimensional memoryless channel.
The optimized coherent information per use converges to the single-use
maximum and consequently attains its supremum at a finite block length.
Thus finite-block superadditivity remains possible, while the asymptotic
optimization reduces to a single channel use. We also establish a strong
converse for pure-source entanglement generation with
permutation-invariant inputs: at any fixed rate above the single-use
coherent-information maximum, the fidelity tends to zero.
Beyond these communication bounds, we identify a common structure
underlying completely bounded norms and the sandwiched R\'enyi channel
quantities governing discrimination against a replacer and
entanglement-assisted communication. De Finetti reduction and permutation
covariance yield systematic alternative proofs of their known
multiplicativity and additivity results within a shared variational
framework. We further show that, for permutation-invariant inputs,
correlations cannot sustain an asymptotic reduction in output entropy
per use below the single-use minimum. This entropy limit holds for
R\'enyi orders greater than one and for the von Neumann entropy, extending
the role of symmetry from communication bounds to the control of output
entropy.

\end{abstract}
\maketitle

\section{Introduction}

How much can correlations between channel uses improve quantum
communication, and when can this improvement be controlled by an
optimization over a single use? These questions lie at the heart of the
quantum-capacity problem. For a finite-dimensional memoryless channel
$\cN$, the Lloyd--Shor--Devetak (LSD) quantum coding theorem gives
\[
 \cQ(\cN)=\lim_{n\to\infty}\frac1n Q^{(1)}(\cN^{\ox n}),
 \qquad
 Q^{(1)}(\cN)=\max_{\rho^A}I_c(\cN,\rho^A),
\]
where $I_c$ is the coherent information
\cite{Lloyd_capacity_97,Shor_direct_capacity2002,Devetak-capacity-2005}.
The regularization reflects the possibility that a correlated input
outperforms every independent-input strategy. This difficulty extends
even to detecting positive capacity: for every prescribed number of
channel uses, there are channels whose coherent information remains zero
at that block length although their quantum capacity is positive
\cite{CubittEtAl2015}. This motivates asking whether input restrictions can reduce the
optimization to a single channel use or, more generally, remove the need
to consider arbitrarily large blocks.

Repetition codes provide a particularly revealing motivation. Shor and
Smolin showed that degeneracy, i.e. the possibility that distinct physical
errors have the same action on the code space, can improve quantum
communication rates \cite{ShorSmolin1996}. DiVincenzo, Shor, and Smolin
then analyzed concatenations of cat codes with random outer codes and
connected their achievable rates to the coherent information of the
inner code states \cite{DiVincenzoShorSmolin1998}. Smith and Smolin
extended the usefulness of degenerate constructions to broad families
of Pauli channels \cite{SmithSmolin2007}. These works showed that
structured block codes can achieve rates above the single-letter
achievable rate $Q^{(1)}(\cN)$, motivating the optimization of such codes
to obtain tighter lower bounds on quantum capacity.

Other works in this vein have made the role of symmetry increasingly
explicit. Leditzky, Leung, and Smith exhibited superadditivity for the
dephrasure channel using weighted repetition inputs
\cite{LeditzkyLeungSmith2018}. These inputs have the form
$\lambda(|0\rangle\langle0|)^{\ox n}
 +(1-\lambda)(|1\rangle\langle1|)^{\ox n}$ and are invariant under
all permutations of the channel uses. Bausch and Leditzky exploited
code symmetries to study repetition, concatenated repetition, and graph
codes for arbitrary Pauli noise \cite{BauschLeditzky2021}.
More recently, Bhalerao and Leditzky used permutation symmetry to evaluate
coherent information for large blocks and improve communication-rate
and noise-threshold lower bounds for several channel families, including
through repetition constructions with nonorthogonal code states
\cite{BhaleraoLeditzky2025}. Symmetry thus makes useful families of
correlated inputs accessible to analysis and computation, while retaining
the possibility of finite-block superadditivity.

These constructive results raise a complementary question: how much
advantage can persist when the block length grows and the input remains
invariant under every permutation? We address this question uniformly
over all permutation-invariant density operators, beyond any particular
repetition-code ansatz. Here the symmetry is imposed on the input to
all $n$ channel uses. A longer code built from symmetric blocks need not remain fully
permutation invariant
\cite{DiVincenzoShorSmolin1998,BhaleraoLeditzky2025}.
Our results apply only when the input to the entire code has this symmetry.

Our first result quantifies this limitation. Define
\begin{equation}
 q_n^{\mathrm{sym}}(\cN):=
 \frac1n\max_{\rho^{A^n}}I_c(\cN^{\ox n},\rho^{A^n}),
 \label{eq:alternative-intro-symmetric-coherent}
\end{equation}
where the maximum is over all permutation-invariant input states
$\rho^{A^n}$. For every fixed finite-dimensional channel, we prove
\begin{equation}
 \begin{gathered}
 Q^{(1)}(\cN)\leq q_n^{\mathrm{sym}}(\cN)
 \leq Q^{(1)}(\cN)+O\!\left(\frac{\log n}{\sqrt n}\right),\\
 \lim_{n\to\infty}q_n^{\mathrm{sym}}(\cN)=Q^{(1)}(\cN).
 \end{gathered}
 \label{eq:alternative-intro-coherent-bound}
\end{equation}
The lower bound follows from product inputs, whereas the upper bound is
uniform over the entire symmetric input class. Thus the asymptotic
coherent-information optimization under full permutation invariance
requires only a single channel use. Finite-block superadditivity remains
possible, but its excess per use vanishes at a controlled rate.

The result also gives a precise finite-block consequence. For any fixed
$\delta>0$, only finitely many block lengths can yield
$q_n^{\mathrm{sym}}(\cN)\geq Q^{(1)}(\cN)+\delta$.
Moreover, the largest normalized coherent information within this
symmetric family is attained at some finite block length. Indeed, if one
block gives a strict improvement over $Q^{(1)}(\cN)$, convergence places
all sufficiently long blocks below that value; otherwise a single use
already attains the supremum. This provides an asymptotic benchmark for
searches over symmetric codes and shows that an arbitrarily late gain
of fixed size is excluded for a fixed channel. 
%The maximizing length can depend on the channel, and the asymptotic estimate alone does not supply an explicit stopping rule for finding it.

We show that pure-source entanglement-generation codes with fully
permutation-invariant channel inputs obey a strong-converse bound:
at any fixed communication rate above $Q^{(1)}(\cN)$, their fidelity tends
to zero as the block length grows. Thus $Q^{(1)}(\cN)$ is a single-letter
upper bound on the quantum capacity of this code class. The precise
code definition is given in Section~\ref{sec:symmetric-coding}; the
result also applies to entanglement transmission with an isometric
encoder and a permutation-invariant channel input.
Our bound holds for every finite-dimensional channel. 
%The symmetry assumption is imposed on the input, whereas the pretty strong converse of Ref.~\cite{pretty_strong_C_Q} assumes a degradable channel. Showing that the capacity of our restricted code class equals $Q^{(1)}(\cN)$ would additionally require an achievability proof within that class.

The proof develops a route from symmetry to tensor-power bounds using
quantum de Finetti and postselection techniques
\cite{half_de_Finetti_2007,post-selection2009}.
A symmetric purification of the input is dominated by a polynomial
multiple of a mixture of tensor-power pure states. Retaining a classical
label for this mixture allows a sandwiched R\'enyi divergence estimate
to reduce the problem to single-copy inputs. The logarithm of the
postselection factor, combined with a uniform estimate as the R\'enyi
order approaches one, yields the stated finite-length correction.
The resulting proof tools are useful beyond the coherent-information
problem, because many channel quantities face the same difficulty of
optimizing over correlations across multiple uses.

The second part of the paper makes this connection systematic.
Completely bounded norms, channel discrimination against a replacer,
and entanglement-assisted communication involve quantities with closely
related variational formulas. Their tensor-product behavior matters
operationally: the additivity of sandwiched R\'enyi channel quantities
enters strong-converse bounds for entanglement-assisted communication
\cite{GuptaWilde2015} and the analysis of discrimination between a channel
and a replacer that discards its input and prepares a fixed output
\cite{CooneyMosonyiWilde2016}. The corresponding multiplicativity and
additivity results are already known
\cite{DevetakJungeKingRuskai2006,Jencova2006,GuptaWilde2015,CooneyMosonyiWilde2016};
here we identify a common theme by deriving them through de Finetti
reduction and permutation covariance.
A common derivation of their additivity
therefore clarifies which structural properties control the effect of
correlated inputs across these tasks.
Minimum output R\'enyi entropy fits into this framework through the
pure-input restriction of the variational formula for the completely
bounded norm, followed by a logarithmic rescaling. The same postselection
method bounds how much correlations in permutation-invariant inputs can
reduce the output entropy.

\medskip

The organization of the paper is as follows.
We introduce our notation and preliminaries at the end of this section.
Section~\ref{sec:coherent-bound} establishes the coherent-information
bound for permutation-invariant inputs. Section~\ref{sec:symmetric-coding}
derives the strong converse for the corresponding class of codes.
Section~\ref{sec:applications} introduces the common functional and
applies the postselection method to additivity and minimum output entropy.
Section~\ref{sec:discussion} discusses the implications of these results
and their common proof strategy. Appendix~\ref{app:supporting-inequalities}
collects the supporting inequalities.

\medskip
\emph{Notation and preliminaries.} All Hilbert spaces are finite dimensional and all logarithms are to base two,
except when $\ln$ is written explicitly. We write $\mathcal S(A)$ for density
operators on $A$, $d=\dim A$, and $\|X\|_p=(\Tr|X|^p)^{1/p}$.
The von Neumann entropy and Umegaki relative entropy are
$S(\rho)=-\Tr\rho\log\rho$ and
$D(\rho\mids\tau)=\Tr\rho(\log\rho-\log\tau)$, respectively;
$S(R|B)_\omega=S(\omega^{RB})-S(\omega^B)$.
For a state $\rho$ and a positive comparison operator $\tau$, the sandwiched
R\'enyi divergence of order $\alpha>1$ is
\begin{align}
 \widetilde D_\alpha(\rho\mids\tau)
 =\frac{1}{\alpha-1}\log\Tr
 \left(\tau^{\frac{1-\alpha}{2\alpha}}
       \rho\tau^{\frac{1-\alpha}{2\alpha}}\right)^\alpha.
 \label{eq:sandwiched-definition}
\end{align}
It is infinite unless $\operatorname{supp}\rho\subseteq\operatorname{supp}\tau$;
negative powers are evaluated on the support. The comparison operator need
not have trace one. The conditional entropies used below are
\begin{align}
 \sdH_\alpha(R|B)_\omega
 &=-\widetilde D_\alpha(\omega^{RB}\mids\1^R\ox\omega^B),\nonumber\\
 \suH_\alpha(R|B)_\omega
 &=-\inf_{\tau_B\in\mathcal S(B)}
       \widetilde D_\alpha(\omega^{RB}\mids\1^R\ox\tau_B).
\end{align}
We use data processing and monotonicity in the R\'enyi order
\cite{Beigi2013,Tomamichel_book}. In particular,
$-S(R|B)_\omega\leq-\suH_\alpha(R|B)_\omega$ for $\alpha>1$.
For a pure $\psi^{RA}$ with $R\cong A$, put
\begin{align}
 Q_\alpha^{\downarrow}(\cN)
 :=\max_{\psi^{RA}\ {\rm pure}}
 \widetilde D_\alpha((\id_R\ox\cN)(\psi^{RA})
                   \mids\1^R\ox\cN(\psi^A)).
 \label{eq:Qalpha-definition}
\end{align}
The arrow specifies the fixed output marginal in the comparison operator;
this quantity is distinct from an optimization over that marginal.

\section{Bound on the coherent information of permutation invariant states}
\label{sec:coherent-bound}

It is known that if a state $\rho^{A^n}$ is permutation-invariant, then it has a purification $\ket{\rho}^{R^nA^n}$ whose support lies in the symmetric subspace
$\Sym^n (\cH_A \ox \cH_R)$ where $\cH_A \cong \cH_R$ \cite{half_de_Finetti_2007}.
This pure state $\ket{\rho}^{R^n A^n}$ is permutation-invariant as
\begin{align}
\pi^{(RA)^n}  \ket{\rho}^{R^n A^n}=(\pi^{R^n} \otimes \pi^{A^n} ) \ket{\rho}^{R^n A^n} = \ket{\rho}^{R^n A^n}
\quad \text{for all } \pi \in S_n,
\end{align}
where $S_n$ denotes the symmetric group. 
%Let $\rho$ be a pure state  supported on the symmetric subspace $\text{Sym}^n(\mathcal{H}_R \ox \mathcal{H}_A)$, 
By  post-selection lemma (Lemma~\ref{lemma: post-selection}) the state is  bounded as
\begin{align}\label{eq: de Finetti for rho}
\proj{{\rho}}^{R^n A^n }
   &\leq  n^{|\cH_A|^2} \int  (\proj{\varphi}^{RA})^{\ox n} d\varphi,
\end{align}
where $d\varphi$ is the uniform measure on $\Sym^n (\cH_A \ox \cH_R)$ and the right-hand-side of the inequality is proportional to the projection onto the  symmetric subspace
$\Sym^n (\cH_A \ox \cH_R)$.

Here the measure $d\varphi$ is understood as Haar probability measure on
single-copy pure states of $R\otimes A$; their tensor powers lie in the
symmetric subspace. With $d=\dim\cH_A$, the exact post-selection factor is
\[
 g_{n,d}=\binom{n+d^2-1}{d^2-1},\qquad
 \int(\proj{\varphi}^{RA})^{\ox n}d\varphi
 =\frac{P_{\Sym^n(RA)}}{g_{n,d}},
\]
as stated in Lemma~\ref{lemma: post-selection}. The polynomial estimates
in this section use $n\geq2$, for which $g_{n,d}\leq n^{d^2}$ and
Carath\'{e}odory's theorem gives $L\leq g_{n,d}^2\leq n^{2d^2}$.
%At $n=1$, use $g_{1,d}=d^2$ instead of $n^{d^2}$; this endpoint does not affect the asymptotic conclusions.

% 
By Carath\'{e}odory’s Theorem, w.l.o.g., the state appearing on the right-hand-side of (\ref{eq: de Finetti for rho}) can be expressed as 
supported on $L \leq  n^{2 |\cH_A|^2} $ points
\begin{align} 
\int  (\proj{\varphi}^{RA})^{\ox n} d\varphi=  \sum_{l=1}^L q_l \; (\proj{\varphi_l}^{RA})^{\ox n}.
\end{align}
We apply the inequality in Eq.~(\ref{eq: de Finetti for rho}) to obtain an upper bound on
the $\alpha$-R\'{e}nyi generalization of the coherent information. 

%\begin{align}
%\overline{\varphi}^{R^n A^n}:=\int  \proj{\varphi}^{\ox n} d\varphi
 %  &=  \sum_{l=1}^L q_l \; (\proj{\varphi_l}^{RA})^{\ox n}, 
%\end{align}

%Define
%\begin{align}
%\overline{\varphi}^{XR^n A^n}:= \sum_{l=1}^L q_l\; \proj{x}^X\ox \proj{\varphi_l}^{\ox n}, 
%\end{align}

\begin{lemma}\label{lemma: H(rho) < H(varphi_l)}
Let $\ket{\rho}^{R^n A^n}$ be a pure quantum state that is permutation-invariant, i.e.,
\[
(\pi^{R^n} \otimes \pi^{A^n} ) \ket{\rho}^{R^n A^n} = \ket{\rho}^{R^n A^n}
\quad \text{for all } \pi \in S_n.
\]
For $\alpha >1$, the following holds for channel $\cN$ and state ${\sigma}^{R^nB^n}=(\id_{R^n}\ox \cN^{\ox n}) (\proj{\rho}^{R^nA^n})$: 
\begin{align}
-\suH_{\alpha}(R^n|B^n)_{\sigma}
\leq n\; \max_{\ket{\psi}^{AR}} \widetilde{D}_{\alpha}\left( (\id_{R} \ox \cN)(\proj{\psi}^{RA}) \mids \1^{R} \ox \cN(\psi^{A}) \right)+\frac{\alpha}{\alpha-1} \cdot |\cH_A|^2\log n.
 \end{align}
\end{lemma}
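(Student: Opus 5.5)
The plan is to bound the optimized conditional entropy by evaluating it at one convenient comparison state, and then to push the post-selection operator inequality~(\ref{eq: de Finetti for rho}) through the channel and into the sandwiched R\'enyi divergence. By definition,
\[
-\suH_\alpha(R^n|B^n)_\sigma=\inf_{\tau}\widetilde D_\alpha(\sigma^{R^nB^n}\mids\1^{R^n}\ox\tau_{B^n})\le \widetilde D_\alpha(\sigma^{R^nB^n}\mids\1^{R^n}\ox\tau^\star).
\]
Write the de Finetti state as $\omega^{R^nA^n}:=\sum_{l=1}^L q_l(\proj{\varphi_l}^{RA})^{\ox n}$, using the Carath\'eodory decomposition above. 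I will take
\[
\tau^\star:=\sum_l q_l\,\cN(\varphi_l^A)^{\ox n},
\]
which is exactly the $B^n$-marginal of $(\id\ox\cN^{\ox n})(\omega)$.

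\textbf{Step 1 (domination).} Since $\id_{R^n}\ox\cN^{\ox n}$ is completely positive, (\ref{eq: de Finetti for rho}) gives $\sigma\le g\,\omega_{\rm out}$, where $\omega_{\rm out}:=(\id\ox\cN^{\ox n})(\omega)$ and $g=g_{n,d}\le n^{d^2}$. Conjugation by the fixed positive operator $(\1\ox\tau^\star)^{\frac{1-\alpha}{2\alpha}}$ preserves this ordering. For $\alpha>1$ the map $X\mapsto\Tr X^\alpha$ is monotone on positive operators, because $t\mapsto t^\alpha$ is increasing and monotone functions are trace-monotone. Hence
\[
\Tr\bigl((\1\ox\tau^\star)^{\frac{1-\alpha}{2\alpha}}\sigma(\1\ox\tau^\star)^{\frac{1-\alpha}{2\alpha}}\bigr)^\alpha\le g^\alpha\,\Tr\bigl(\cdots\omega_{\rm out}\cdots\bigr)^\alpha .
\]
Taking $\frac1{\alpha-1}\log$ yields
\[
\widetilde D_\alpha(\sigma\mids\1\ox\tau^\star)\le\widetilde D_\alpha(\omega_{\rm out}\mids\1\ox\tau^\star)+\frac{\alpha}{\alpha-1}\log g .
\]
The support condition is inherited from $\omega_{\rm out}$, because $\sigma\le g\,\omega_{\rm out}$ forces $\operatorname{supp}\sigma\subseteq\operatorname{supp}\omega_{\rm out}$. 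This step produces the term $\frac{\alpha}{\alpha-1}|\cH_A|^2\log n$.

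\textbf{Step 2 (removing the mixture).} Both arguments are mixtures with the same weights $q_l$. I plan to adjoin a classical flag $X$ and form
\[
\sum_l q_l\proj{l}^X\ox\sigma_l^{\ox n}\quad\text{against}\quad \sum_l q_l\proj l^X\ox(\1\ox\tau_l)^{\ox n},
\]
where $\sigma_l=(\id\ox\cN)(\varphi_l)$ and $\tau_l=\cN(\varphi_l^A)$. Tracing out $X$ is a channel, so data processing shows that the divergence of the mixtures is at most that of the flagged states. For block-diagonal arguments the divergence evaluates to
\[
\frac1{\alpha-1}\log\sum_l q_l\,2^{(\alpha-1)\widetilde D_\alpha(\sigma_l^{\ox n}\|(\1\ox\tau_l)^{\ox n})},
\]
where the weights $q_l^\alpha q_l^{1-\alpha}=q_l$ combine. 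This is at most $\max_l\widetilde D_\alpha(\sigma_l^{\ox n}\mids(\1\ox\tau_l)^{\ox n})$. Equivalently, one can invoke joint quasi-convexity of $\widetilde D_\alpha$.

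\textbf{Step 3 (tensorization).} The sandwiched divergence is additive on tensor products, since powers and traces factorize. Therefore each term equals $n\,\widetilde D_\alpha\bigl((\id\ox\cN)(\varphi_l)\mids\1\ox\cN(\varphi_l^A)\bigr)$. Each $\varphi_l$ is a pure state on $RA$, so this is at most $n$ times the maximum over $\ket\psi^{AR}$ appearing in the statement. Combining Steps 1--3 gives the claim.

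I expect the main care to be needed in Step 1: the monotonicity of the first argument at a \emph{fixed} comparison operator, and the correct exponent $\alpha/(\alpha-1)$ on $\log g$. The point that makes this work is choosing $\tau^\star$ from the de Finetti mixture $\omega$ rather than from $\sigma^{B^n}$. For this reason the lemma is stated for $\suH_\alpha$, with its infimum over the comparison state, and not for $\sdH_\alpha$.
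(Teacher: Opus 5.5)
Your proposal is correct and follows the paper's proof essentially step for step. It uses the same comparison operator $\overline{\upsilon}^{B^n}=\sum_l q_l\,\cN(\varphi_l^A)^{\ox n}$ in the infimum, the domination estimate Eq.~\eqref{eq:order-inequality}, a classical flag with data processing and Eq.~\eqref{eq:flag-inequality}, and additivity on tensor powers. The only caveat, which the paper shares, is that $g_{n,d}\leq n^{d^2}$ requires $n\geq2$; for $n=1$ the claim follows directly by taking $\tau=\sigma^B$ in the infimum.
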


\begin{proof}

The state $\rho^{R^nA^n}$ is related to state $\overline{\varphi}^{R^nA^n}$ as in Eq.~(\ref{eq: de Finetti for rho})
\begin{align*}
\proj{\rho}^{R^nA^n}  \leq n^{|\cH_A|^2} \underbrace{\sum_{l=1}^L q_l \; \proj{\varphi_l}^{\ox n}}_{\overline{\varphi}},
\end{align*}
and $L \leq  \; n^{2|\cH_A|^2} $.
By the positivity of the channel, applying it preserves the inequality and we obtain
\begin{align}
\sigma^{R^nB^n}&=(\id_{R^n} \ox \cN^{\ox n}) \left(\proj{\rho}^{R^nA^n} \right) \\
& \leq n^{ |\cH_A|^2} \sum_{l=1}^L q_l \;  \left( (\id_{R} \ox \cN) (\proj{\varphi_l}) \right)^{\ox n}\\
& = n^{|\cH_A|^2} \sum_{l=1}^L q_l \; (\upsilon_l^{RB})^{\ox n} \\
& = n^{|\cH_A|^2} \overline{\upsilon}^{R^nB^n}.
\end{align}
We apply Lemma~\ref{lemma: H_alpha for rho<gamma rho'} for the inequality $\sigma^{R^nB^n} \leq n^{ |\cH_A|^2} \; \overline{\upsilon}^{R^nB^n}$
\begin{align}
-\suH_{\alpha}(R^n|B^n)_{\sigma}
&\leq \widetilde{D}_{\alpha}( \overline{\upsilon}^{R^nB^n} \mids \1^{R^n} \ox \overline{\upsilon}^{B^n}) +\frac{\alpha}{\alpha-1} |\cH_A|^2 \log n \\
&\leq \widetilde{D}_{\alpha}( \overline{\upsilon}^{R^nB^nY} \mids \1^{R^n} \ox \overline{\upsilon}^{B^nY}) +\frac{\alpha}{\alpha-1} |\cH_A|^2 \log n\\
&\leq \max_l \; \widetilde{D}_{\alpha}( (\upsilon_l^{RB})^{\ox n} \mids \1^{R^n} \ox (\upsilon_l^{B})^{\ox n})+\frac{\alpha}{\alpha-1} |\cH_A|^2 \log n\\
&= \max_l \; n\; \widetilde{D}_{\alpha}( \upsilon_l^{RB} \mids \1^{R} \ox \upsilon_l^{B})+\frac{\alpha}{\alpha-1} |\cH_A|^2 \log n,
 \end{align}
%where in the first line we apply Lemma~\ref{lemma: H_alpha for rho<gamma rho'}.
where the second line is due to data processing and the entropy is with respect to the state
\begin{align*}
\overline{\upsilon}^{R^n B^nY} = \sum_{l=1}^L q_l \; (\upsilon_l^{RB})^{\ox n} \ox \proj{l}^Y,
\end{align*}
and the last inequality follows from
Lemma~\ref{lemma: D < max_y D_y}.
%
%There are $L \leq  n^{2|\cH_A|^2} $ terms in the average state $\varphi$. 
%
The state above is defined as $\upsilon_l^{RB}=(\id_{R} \ox \cN)\proj{\varphi_l}^{RA}$, hence
\begin{align}
 \widetilde{D}_{\alpha}( \upsilon_l^{RB} \mids \1^{R} \ox \upsilon_l^{B}) &\leq \max_{\ket{\psi}^{AR}} \widetilde{D}_{\alpha}\left( (\id_{R} \ox \cN)\psi^{RA} \mids \1^{R} \ox \cN(\psi^{A}) \right).
 %&= \max_{\ket{\psi}^{AR}} I_c(\cN, \psi).
\end{align}
\end{proof}

\begin{theorem}\label{thm: bound on permutation invariant states}
Let $\ket{\rho}^{R^n A^n}$ be a pure quantum state that is permutation-invariant, i.e.,
\[
(\pi^{R^n} \otimes \pi^{A^n} ) \ket{\rho}^{R^n A^n} = \ket{\rho}^{R^n A^n}
\quad \text{for all } \pi \in S_n.
\]
 Then, for a channel $\cN$ and state ${\sigma}^{R^nB^n}=(\id_{R^n}\ox \cN^{\ox n}) \rho^{R^nA^n}$
 the following holds for $\alpha = 1 + \frac{1}{\sqrt{n}}$ and any $n \in \mathbb{N}$
\begin{align*}
-\suH_{\alpha}(R^n|B^n)_{\sigma}
\leq n \; Q^{(1)}(\cN)+O(\sqrt{n} \log n)
 \end{align*}

\end{theorem}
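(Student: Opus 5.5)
The plan is to combine Lemma~\ref{lemma: H(rho) < H(varphi_l)} with a uniform continuity estimate for $Q_\alpha^{\downarrow}(\cN)$ as $\alpha\to1^+$. The lemma gives, for every $\alpha>1$,
\[
-\suH_{\alpha}(R^n|B^n)_{\sigma}\leq n\,Q_\alpha^{\downarrow}(\cN)+\frac{\alpha}{\alpha-1}\,|\cH_A|^2\log n .
\]
With $\alpha=1+1/\sqrt n$ we have $\frac{\alpha}{\alpha-1}=\sqrt n+1$. The second term is therefore $(\sqrt n+1)|\cH_A|^2\log n=O(\sqrt n\log n)$. It remains to show that
\[
Q_{1+1/\sqrt n}^{\downarrow}(\cN)\leq Q^{(1)}(\cN)+O\!\left(\frac{1}{\sqrt n}\right),
\]
since multiplying by $n$ then gives an $O(\sqrt n)$ error, which is absorbed into $O(\sqrt n\log n)$. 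For $n=1$ the claim is trivial: every quantity is bounded by constants depending on $d$ and $\dim B$, or is handled by the $n\ge2$ convention stated above.

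The key step is a uniform first-order expansion of the sandwiched R\'enyi divergence near $\alpha=1$. For each pure $\psi^{RA}$, let $\omega=(\id_R\ox\cN)(\psi)$ and $\tau=\1^R\ox\cN(\psi^A)$. Then
\[
\widetilde D_1(\omega\mids\tau)=D(\omega\mids\tau)=-S(R|B)_\omega=I_c(\cN,\psi^A).
\]
I would invoke a standard continuity bound, presumably the one collected in Appendix~\ref{app:supporting-inequalities}, of the form
\[
\widetilde D_\alpha(\omega\mids\tau)\leq D(\omega\mids\tau)+(\alpha-1)\,K\quad\text{for }1<\alpha\leq 1+c .
\]
Here $K$ depends only on the dimensions, not on $\psi$. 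Two routes are available. The first uses $\widetilde D_\alpha\le D_\alpha$ (Petz), then the Tomamichel--Colbeck--Renner-type bound $D_\alpha\le D+4(\alpha-1)(\log\eta)^2$, where $\eta$ is controlled by $2^{\pm \frac12 D_{\max}}$-type quantities. The second bounds the derivative of $\alpha\mapsto(\alpha-1)\widetilde D_\alpha$ directly. Taking the maximum over $\psi$ then gives
\[
Q_\alpha^{\downarrow}(\cN)\le\max_\psi I_c(\cN,\psi^A)+(\alpha-1)K=Q^{(1)}(\cN)+K/\sqrt n .
\]

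The main obstacle is uniformity of $K$ over all pure inputs. The comparison operator $\1^R\ox\cN(\psi^A)$ can have arbitrarily small nonzero eigenvalues, so the crude $\eta$ in the Tomamichel--Colbeck--Renner bound, which involves $\lambda_{\min}(\tau)^{-1}$, blows up. I would first try to exploit the conditional-entropy structure. The Tomamichel--Colbeck--Renner lemma for conditional entropies gives
\[
H_\alpha(R|B)\ge H(R|B)-4(\alpha-1)\bigl(\log \upsilon\bigr)^2,\qquad \upsilon\le \sqrt{2^{-H_{\min}}}+\sqrt{2^{H_{\max}}}+1\le 2\sqrt{d}+1 ,
\]
valid for $\alpha$ below roughly $1+\log 3/(4\log\upsilon)$. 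The support condition $\operatorname{supp}\omega\subseteq\operatorname{supp}(\1\ox\omega^B)$ holds automatically, and the bound depends only on $d=\dim R$, which yields the needed $K=O(\log^2 d)$ independently of $\psi$. If that route fails for the $\downarrow$ version, a fallback is to compare with the $\uparrow$ version via duality, or to use compactness of the set of pure states together with continuity in $\alpha$. Compactness gives the limit but not the $1/\sqrt n$ rate, so the dimension-dependent bound above is the preferred route.
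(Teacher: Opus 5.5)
Your proposal is correct and follows the paper's overall structure. You apply Lemma~\ref{lemma: H(rho) < H(varphi_l)}, use $\alpha/(\alpha-1)=\sqrt n+1$ so the postselection term is $O(\sqrt n\log n)$, and reduce to a bound $Q^{\downarrow}_{1+1/\sqrt n}(\cN)\le Q^{(1)}(\cN)+O(1/\sqrt n)$ that holds uniformly over pure inputs.

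You differ from the paper in the continuity lemma used for that last step. The paper uses the second-order expansion $\widetilde D_\alpha\le D+(\alpha-1)V/(2\log e)+(\alpha-1)^2C$ from Tomamichel's book. It then argues separately that $V$ and $C$ are bounded uniformly in $\psi$: from $\omega_\psi\le d\,(\1^R\ox\omega_\psi^B)$ and $\Tr(\1^R\ox\omega_\psi^B)=d$ it derives moment bounds on the Nussbaum--Szko{\l}a variable.

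Your route is $\widetilde D_\alpha\le D_\alpha$ followed by the Tomamichel--Colbeck--Renner conditional-entropy lemma with comparison state $\omega^B$. It packages the uniformity into $\upsilon\le2\sqrt d+1$ and gives the explicit first-order constant $4\log^2(2\sqrt d+1)$. The cost is the restricted range $\alpha<1+\log3/(4\log\upsilon)$, which excludes only finitely many $n$ and so does not affect an asymptotic statement.

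You should justify $\upsilon\le2\sqrt d+1$ rather than assert it:
\begin{itemize}
\item $2^{-H_{\min}}\le d$ follows from the operator inequality $\omega^{RB}\le d\,\1^R\ox\omega^B$, the same one the paper uses.
\item $2^{H_{\max}}\le d$ follows from the H\"older bound $F(\omega,\1^R\ox\omega^B)^2\le\Tr\omega\cdot\Tr(\1^R\ox\omega^B)=d$.
\end{itemize}

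The paper's route keeps the sharper variance coefficient, but this does not matter here because the $O(\sqrt n\log n)$ postselection term dominates either way. Your fallbacks (duality, compactness) are unnecessary, since the preferred route already works for the $\downarrow$ quantity. Also, the appendix contains no such continuity bound, so cite the Tomamichel--Colbeck--Renner lemma directly.
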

\begin{proof}
In Lemma~\ref{lemma: H(rho) < H(varphi_l)}, we obtain the following bound for  $\alpha> 1$
\begin{align}\label{eq: suH_alpha < n D_alpha}
-\suH_{\alpha}(R^n|B^n)_{\sigma}
\leq n\; \max_{\ket{\psi}^{AR}} \widetilde{D}_{\alpha}\left( (\id_{R} \ox \cN)\psi^{RA} \mids \1^{R} \ox \cN(\psi^{A}) \right)+\frac{\alpha}{\alpha-1} \cdot |\cH_A|^2\log n.
 \end{align}
%
%The smooth max-divergence is bounded using the Taylor series expansion for the R\'enyi divergence. 
It is known that for all $\alpha \in (1,2]$ and all $\rho$ and $\sigma$, there exists a constant $C$ such that \cite[Eq.~(4.102)]{Tomamichel_book}
\begin{align}
\widetilde{D}_\alpha(\rho \Vert \sigma) 
\leq D(\rho \Vert \sigma) 
+ (\alpha - 1) \frac{1}{2 \log(e)} V(\rho \Vert \sigma) 
+ (\alpha - 1)^2 C,
\end{align}
where $V(\rho  \Vert \sigma):=\Tr\left(\rho (\log \rho -\log \sigma)^2 \right)-(D(\rho  \Vert \sigma))^2$.
By applying this inequality to Eq.~(\ref{eq: suH_alpha < n D_alpha}) and setting $\alpha \coloneqq 1 + \frac{1}{\sqrt{n}}$ we obtain
\begin{align}
n &\;\widetilde{D}_{\alpha}\left( (\id_{R} \ox \cN)\psi^{RA} \mids \1^{R} \ox \cN(\psi^{A}) \right)\\
&\leq n\; D\left( (\id_{R} \ox \cN)\psi^{RA} \mids \1^{R} \ox \cN(\psi^{A}) \right)+  \frac{\sqrt{n}}{2 \log(e)} V\left( (\id_{R} \ox \cN)\psi^{RA} \mids \1^{R} \ox \cN(\psi^{A}) \right) + C \\
&\leq n\; \max_{\psi^{A}} I_c(\cN, \psi)+  \frac{\sqrt{n}}{2 \log(e)} V\left( (\id_{R} \ox \cN)\psi^{RA} \mids \1^{R} \ox \cN(\psi^{A}) \right) + C.
\end{align}

Put $\omega_\psi=(\id_R\ox\cN)(\proj\psi^{RA})$,
$s_\psi=\1^R\ox\omega_\psi^B$, and $d=\dim R=\dim A$.
The relations $\omega_\psi\leq ds_\psi$ and $\Tr s_\psi=d$ imply
$\mathbb E\mathrm e^{\pm Z_\psi}\leq d$ for the associated
Nussbaum--Szko{\l}a log-likelihood variable $Z_\psi$, with natural
logarithms~\cite[Sec.~4.4.2]{Tomamichel_book}.
These moment bounds uniformly control its variance and the third
derivative of $\ln\mathbb E\mathrm e^{tZ_\psi}$ for
$0\leq t\leq\tfrac12$. Taylor expansion of the Petz divergence,
which upper bounds the sandwiched divergence, therefore gives
$V(\omega_\psi\mids s_\psi)\leq V_*<\infty$ and permits one finite
remainder constant $C_*$ for all inputs and $0<t\leq\tfrac12$
\cite[Eqs.~(4.88) and (4.102)]{Tomamichel_book}.
Maximizing the preceding bound thus adds at most
$\sqrt n V_*/(2\log(e))+C_*=O(\sqrt n)$ for $n\geq4$;
Eq.~\eqref{eq: suH_alpha < n D_alpha} then proves the claim.

\end{proof}
We note that for any $\alpha>1$, any finite dimensional state $\sigma$, inequality $-S(R^n|B^n)_{\sigma} \leq -\suH_{\alpha}(R^n|B^n)_{\sigma}$ holds \cite{Tomamichel_book}. 
Hence we can conclude the following. 
\begin{corollary} %\label{thm: bound on permutation invariant states}
For any  channel $\cN$  and any number $n \in \mathbb{N}$
\begin{align*}
\sup_{\rho^{A^n}} \; I_c(\cN^{\ox n} , \rho^{A^n})
\leq n \; Q^{(1)}(\cN)+O(\sqrt{n} \log n)
 \end{align*}
where the optimization is over all permutation-invariant states $\rho^n$. 
\end{corollary}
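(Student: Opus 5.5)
The plan is to reduce the corollary directly to Theorem~\ref{thm: bound on permutation invariant states} through a symmetric purification. Fix $n$ and a permutation-invariant input $\rho^{A^n}$. By the result of \cite{half_de_Finetti_2007} recalled at the start of this section, $\rho^{A^n}$ has a purification $\ket{\rho}^{R^nA^n}$ with $R\cong A$ and support in $\Sym^n(\cH_A\ox\cH_R)$. This purification is invariant under $\pi^{R^n}\ox\pi^{A^n}$ for all $\pi\in S_n$, so it meets the hypothesis of Theorem~\ref{thm: bound on permutation invariant states}.

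The coherent information does not depend on the choice of purification. We can therefore write
\[
I_c(\cN^{\ox n},\rho^{A^n})=-S(R^n|B^n)_\sigma, \qquad \sigma^{R^nB^n}=(\id_{R^n}\ox\cN^{\ox n})(\proj{\rho}^{R^nA^n}).
\]
Next I will invoke the monotonicity noted just before the corollary: $-S(R^n|B^n)_\sigma\le-\suH_\alpha(R^n|B^n)_\sigma$ for every $\alpha>1$. I will use it at $\alpha=1+1/\sqrt n$. Theorem~\ref{thm: bound on permutation invariant states} then gives
\[
I_c(\cN^{\ox n},\rho^{A^n})\le nQ^{(1)}(\cN)+O(\sqrt n\log n).
\]

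The key point, and the only real obstacle, is \emph{uniformity} of the $O(\cdot)$ term over the input, since the corollary takes a supremum over all permutation-invariant $\rho^{A^n}$. I will check this against Lemma~\ref{lemma: H(rho) < H(varphi_l)} and the proof of the theorem.

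\emph{Postselection term.} The correction from Lemma~\ref{lemma: H(rho) < H(varphi_l)} is $\frac{\alpha}{\alpha-1}|\cH_A|^2\log n=(\sqrt n+1)d^2\log n$. It depends only on $n$ and $d$.

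\emph{Taylor remainder.} The remainder $\sqrt n V_*/(2\log e)+C_*$ was shown to hold with constants $V_*$ and $C_*$ uniform over all pure single-copy $\psi^{RA}$. These constants depend only on $\cN$ and $d$.

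Hence the bound holds for every permutation-invariant input with the same implicit constant, and taking the supremum preserves it.

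Two small points remain. For small $n$ (e.g.\ $n<4$, or $n=1$ where $\log n=0$), the claim follows trivially: coherent information is bounded by $n\log d$, so these finitely many cases are absorbed into the constant. Finally, the supremum is attained, since the set of permutation-invariant states is compact and $I_c$ is continuous, though this is not needed for the inequality.
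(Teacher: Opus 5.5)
Your proposal is correct and follows the paper's route: a symmetric purification, the inequality $-S(R^n|B^n)_\sigma\le-\suH_\alpha(R^n|B^n)_\sigma$ at $\alpha=1+1/\sqrt n$, and then Theorem~\ref{thm: bound on permutation invariant states}. Your uniformity check, via the $(\sqrt n+1)d^2\log n$ postselection term and the input-independent constants $V_*$ and $C_*$, matches the paper's proof. One small slip: at $n=1$ the error term $\sqrt n\log n$ vanishes, so nothing can be absorbed into a constant there; that case instead holds directly by the definition of $Q^{(1)}(\cN)$.
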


%%%%%%%%%%%%%%%%%%%%%%%%%%%%%%%%%%%%%%%%%%%%%%%%%

\section{Strong converse for the quantum capacity with symmetric codes}
\label{sec:symmetric-coding}

Obtaining a strong converse bound for the entanglement generation capacity 
implies that such a bound also holds for the entanglement transmission capacity. 
We begin by briefly defining the entanglement generation code below.

The symmetric-input converse below concerns the pure-source codes defined
here, including entanglement transmission with an isometric encoder and a
permutation-invariant channel input. Under this input restriction, the
``w.l.o.g.'' pure-state reduction should not be taken to cover arbitrary
mixed encoders: decomposing a mixed source into pure states need not
preserve the symmetry of its input marginal.
For unrestricted inputs, a pure component with no larger error can be
selected as in~\cite[Sec.~II, Remark]{pretty_strong_C_Q}, but this selection
need not preserve the input-symmetry constraint imposed here.

\noindent \textbf{The entanglement generation code: }
An $(n,\epsilon)$ entanglement generation code is defined as follows. Alice prepares w.l.o.g a bipartite pure state  $\ket{\psi}^{A^nR}$, and sends $A^n$ to Bob through
$n$ uses of the channel $\cN: A \to B$ with an isometric extension $U_{\cN}: A \hookrightarrow BE$. The output state of the isometric extension $U_{\cN}^{\ox n}$ of the channel is
\begin{align}
    \ket{\sigma}^{B^nE^nR}:=(U_{\cN}^{\ox n}\ox \1_{R})\ket{\psi}^{A^nR},
\end{align}
where $E^n$ is the environment system of the channel.
Then, Bob applies the decoding map $\cD_n: B^n \to \widetilde{A}$. The output state of isometric extension $U_{\cD_n}: B^n \hookrightarrow \widetilde{A} W$ of the decoding map is 
\begin{align}
    \ket{\xi}^{\widetilde{A}E^nWR}:=(U_{\cD_n} \ox \1_{E^nR})\ket{\sigma}^{B^nE^nR},
\end{align}
where $W$ is the environment system of the decoding isometry. 
The code has \textit{error} $\epsilon$ if
\begin{align}
    P(\proj{\Phi}^{\widetilde{A}R},\xi^{\widetilde{A}R}) \leq \epsilon, %=\sqrt{1-F^2(\proj{\Phi}^{QR},\xi^{QR})}=\epsilon,
\end{align}
%
%The fidelity  for the entanglement generation protocol is defined as
%\begin{align}
%    F(\proj{\Phi}^{QR},\xi^{QR})=\sqrt{\bra{\Phi}\xi^{QR}\ket{\Phi}^{QR}},
%\end{align}
where ${\xi}^{\widetilde{A}R}=\Tr_{E^n W} (\proj{\xi}^{\widetilde{A}E^nWR})$, and ${\Phi}^{\widetilde{A}R}$ is a maximally entangled state of dimension $|\widetilde{A}|=|R|$.
The rate of the code is defined as $\frac{\log |\widetilde{A}|}{n}$.
%
%The maximum dimension $|A|$ such that there exists a
%
Let $N(n, \epsilon, \rho^n)=|\widetilde{A}|$  denote the dimension of 
quantum code for $\cN^{\ox n}$ with error $\epsilon$ and input $\rho^n=\Tr_R(\proj{\psi}^{A^n R})$. 
The (asymptotic) quantum capacity $\cQ(\cN)$ of a quantum channel $\cN$ is given by:
\begin{align}
    \cQ(\cN) = \inf_{\varepsilon > 0} \liminf_{n \to \infty} \;\frac{1}{n} \;\sup_{\rho^n} \; \log N(n, \varepsilon,\rho^n).
\end{align}

In this capacity expression, $N(n,\epsilon,\rho^n)$ is understood as the
largest achievable dimension over preparation and decoding choices with
the specified input and error bound; equivalently, the supremum is over
admissible codes. In the theorem, $0\leq\epsilon<1$ is fixed as
$n\to\infty$. The implicit constant may depend on $\epsilon$ and the
fixed channel dimension, but not on the symmetric input state.

\begin{theorem}\label{thm:symmetric-code-converse} %\label{thm: bound on permutation invariant states}
For any channel $\cN$, the rate of an entanglement generation code with length $n \in \mathbb{N}$, error $\epsilon$, and a permutation-invariant input state ${\rho}^{A^n}$ is bounded as 
\begin{align*}
\frac{1}{n}{\log}\,N(n, \epsilon, \rho^n)
\leq  \; Q^{(1)}(\cN)+O(\frac{\log n}{\sqrt{n}}).
 \end{align*} 
\end{theorem}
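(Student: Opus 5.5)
The plan is to reduce the claim to Theorem~\ref{thm: bound on permutation invariant states} through a one-shot converse stated in terms of the sandwiched conditional R\'enyi entropy. For entanglement generation, the standard strong-converse route says: if the decoded state $\xi^{\widetilde A R}$ is $\epsilon$-close in purified distance to $\Phi^{\widetilde A R}$, then for every $\alpha>1$
\begin{align}
\log |\widetilde A| \leq -\suH_\alpha(R|\widetilde A)_\xi + \frac{\alpha}{\alpha-1}\log\frac{1}{1-\epsilon^2}.
\end{align}
To obtain it, evaluate $\widetilde D_\alpha(\xi^{R\widetilde A}\mids \1^R\ox\tau_{\widetilde A})$ against the test given by the projector onto $\Phi$. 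Since $\bra{\Phi}(\1^R\ox\tau)\ket{\Phi}=1/|\widetilde A|$ for every state $\tau$, the data-processing (test) bound for the sandwiched divergence gives
\begin{align}
\widetilde D_\alpha \geq \frac{1}{\alpha-1}\log\left[(1-\epsilon^2)^\alpha |\widetilde A|^{\alpha-1}\right],
\end{align}
using that the fidelity is at least $1-\epsilon^2$. Taking the infimum over $\tau$ yields the displayed inequality.

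Next, apply data processing to the decoder $\cD_n:B^n\to\widetilde A$, acting on the conditioning system. This gives $-\suH_\alpha(R|\widetilde A)_\xi\leq -\suH_\alpha(R|B^n)_\sigma$, where $\sigma^{RB^n}=(\id_R\ox\cN^{\ox n})(\psi^{A^nR})$.

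The symmetry then has to be moved onto the reference system. The source $\psi^{A^nR}$ is an arbitrary purification of the permutation-invariant state $\rho^{A^n}$, while Theorem~\ref{thm: bound on permutation invariant states} needs a symmetric purification $\ket{\rho}^{R^nA^n}$. All purifications are related by an isometry $V:R^n\to R$ (possibly after enlarging $R$ in a way that does not matter), and the conditional entropy $\suH_\alpha(R|B)$ is invariant under local isometries on $R$. Hence $-\suH_\alpha(R|B^n)_\sigma=-\suH_\alpha(R^n|B^n)_{\sigma'}$ for the state produced from the symmetric purification.

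Finally, set $\alpha=1+1/\sqrt n$ and invoke Theorem~\ref{thm: bound on permutation invariant states}, which gives $n\,Q^{(1)}(\cN)+O(\sqrt n\log n)$. The error term becomes
\begin{align}
\frac{\alpha}{\alpha-1}\log\frac{1}{1-\epsilon^2}=(\sqrt n+1)\log\frac{1}{1-\epsilon^2}=O(\sqrt n),
\end{align}
with a constant depending only on $\epsilon<1$. Dividing by $n$ gives the stated rate bound $Q^{(1)}(\cN)+O(\log n/\sqrt n)$. The strong-converse reading then follows: at a fixed rate $r>Q^{(1)}$, the same inequality, with $1-\epsilon^2$ replaced by the fidelity $F$, forces $\log(1/F)\geq \frac{\alpha-1}{\alpha}\,n(r-Q^{(1)})-O(\log n)$, which grows like $\sqrt n$, so $F\to0$.

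The main obstacle is the one-shot converse step. One must check the exact form of the fidelity-based test bound for $\widetilde D_\alpha$, keeping track of the relation between purified distance and fidelity, whether the square-root or squared convention for $F$ is used, and that the bound holds uniformly in $\tau$. If the direct test bound is awkward, I would first try the monotonicity of the sandwiched divergence under the binary measurement $\{\Phi,\1-\Phi\}$, followed by the elementary estimate $\frac{1}{\alpha-1}\log\bigl(p^\alpha q^{1-\alpha}\bigr)$ for the resulting classical divergence. The isometric-invariance step and the bookkeeping of constants are routine by comparison.
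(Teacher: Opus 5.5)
Your proposal is correct, and its skeleton matches the paper's proof. Both arguments use a one-shot bound $\log N\leq-\suH_\alpha(R|\widetilde A)_\xi+\frac{\alpha}{\alpha-1}\log\frac{1}{1-\epsilon^2}$, then data processing through $\cD_n$, then an isometric change of reference to the symmetric purification, and finally Theorem~\ref{thm: bound on permutation invariant states} at $\alpha=1+1/\sqrt n$.

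The one real difference is how the one-shot bound is obtained.

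\begin{itemize}
\item The paper cites the generalized van Dam--Hayden inequality (Leditzky, Prop.~2.8). It compares $\widetilde H^{\uparrow}_\beta(R|\widetilde A)_\Phi=-\log N$ at an order $\beta\in(\tfrac12,1)$ with $\suH_\alpha(R|\widetilde A)_\xi$ at the conjugate order $\alpha=\beta/(2\beta-1)$, with penalty $\frac{2\beta}{\beta-1}\log F$.
\item You apply data processing to $\widetilde D_\alpha(\xi\mids\1^R\ox\tau)$ under the two-outcome measurement $\{\Phi,\1-\Phi\}$. You then drop the second outcome, which only lowers the R\'enyi sum for $\alpha>1$, and use $\langle\Phi|(\1^R\ox\tau)|\Phi\rangle=1/N$ for every density operator $\tau$.
\end{itemize}

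Since $\frac{2\beta}{\beta-1}=\frac{2\alpha}{1-\alpha}$ and $F^2=\langle\Phi|\xi|\Phi\rangle$, both routes give exactly the same penalty, $\frac{\alpha}{\alpha-1}\log\frac{1}{F^2}\leq(\sqrt n+1)\log\frac{1}{1-\epsilon^2}$. So nothing is lost in the rate bound or in the strong-converse exponent.

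Your route is self-contained. It needs only data processing of $\widetilde D_\alpha$ for $\alpha>1$ with an unnormalized comparison operator. It also avoids the bookkeeping about the range of $\beta$ for which $\widetilde H^{\uparrow}_\beta(R|\widetilde A)_\Phi=-\log N$ holds, which the paper has to caveat.

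The cited inequality is more general, since it holds for an arbitrary pair of bipartite states. That generality is not needed here, because the target state is maximally entangled.

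As in the paper, your final rate statement relies on the $O(\sqrt n\log n)$ term of Theorem~\ref{thm: bound on permutation invariant states} being uniform over symmetric inputs, which that theorem's proof supplies.
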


\begin{proof}

Let $\sigma^{RB^nE^n}$ be the output of the channel with input $\rho^n$. 
We note that for the maximally entangled state $\Phi^{RA}$ of dimension $N$, $\widetilde{H}_{\beta}^{\uparrow}(R|A)_{\Phi}=-\log N$ for  $\beta > 0$. The preceding equality
is meant for $\beta\geq\tfrac12$, which includes the range used below;
it does not hold for all $\beta>0$ under the optimized sandwiched
R\'enyi definition. For $\tfrac12<\beta<1$, we use the same trace
formula and conditional-entropy optimization as in the preliminaries,
with $\beta$ in place of $\alpha$. For  $\beta \in ( \frac{1}{2},1)$, we apply the generalization of van Dam-Hayden inequality (in Proposition 2.8. in \cite{Leditzky2016}) to obtain
\begin{align}
   \log N(n,\epsilon,\rho^n) &= -\widetilde{H}_{\beta}^{\uparrow}(R|A)_{\Phi} \\
    & \leq -\widetilde{H}_{\alpha}^{\uparrow}(R|A)_{\xi}+\frac{2\beta}{\beta-1}\log F(\Phi^{AR},\xi^{AR})\\
    & \leq -\widetilde{H}_{\alpha}^{\uparrow}(R|B^n)_{\sigma}+\frac{2\beta}{\beta-1}\log F(\Phi^{AR},\xi^{AR}),
\end{align}
where $\alpha(\beta)=\frac{\beta}{2\beta-1}>1$, and the last line is due to data processing. 
The fidelity is related to purified distance as $F(\Phi^{AR},\xi^{AR})=\sqrt{1-P(\Phi^{AR},\xi^{AR})^{2}} \geq \sqrt{1-\epsilon^2}$. Hence, we obtain
\begin{align}
   \log N(n,\epsilon,\rho^n) 
    & \leq -\widetilde{H}_{\alpha}^{\uparrow}(R|B^n)_{\sigma}+\frac{2\alpha}{1-\alpha}\log F(\Phi^{AR},\xi^{AR})\\
    & \leq -\widetilde{H}_{\alpha}^{\uparrow}(R|B^n)_{\sigma}+\frac{\alpha}{1-\alpha}\log (1-\epsilon^2).
\end{align}

To apply Section~\ref{sec:coherent-bound}, choose the symmetric
purification of $\rho^{A^n}$ on $R_0^{\otimes n}A^n$, where $R_0\cong A$.
It and the code's purification are related by an isometry between their
reference supports. Conditional R\'enyi entropy is invariant under this
isometry and adjoining unused reference dimensions. Thus the bound from
Section~\ref{sec:coherent-bound} applies to the reference $R$ used here;
no full-space isometry $R\to R_0^{\otimes n}$ is required.

For an explicit remainder uniform over the input, put $d=\dim A$ and
$g_{n,d}=\binom{n+d^2-1}{d^2-1}$. The reduction in
Section~\ref{sec:coherent-bound}, with the exact factor from
Lemma~\ref{lemma: post-selection}, gives, for every $\alpha>1$,
\begin{align}
 -\suH_\alpha(R|B^n)_\sigma
 \leq nQ_\alpha^{\downarrow}(\cN)
       +\frac{\alpha}{\alpha-1}\log g_{n,d}.
 \label{eq:symmetric-code-renyi}
\end{align}

We let $\alpha=1+\frac{1}{\sqrt{n}}$, and 
for a permutation invariant state $\rho^{A^n}$ we apply Theorem~\ref{thm: bound on permutation invariant states} to obtain a bound 
on $-\widetilde{H}_{\alpha}^{\uparrow}(R|B^n)_{\sigma}$
\begin{align}
   \log N(n,\epsilon,\rho^{A^n}) 
    & \leq n \; Q^{(1)}(\cN)+O(\sqrt{n} \log n)-(\sqrt{n}+1)\log (1-\epsilon^2).
\end{align}
Dividing by $n$ proves the stated rate bound for fixed $\epsilon<1$;
the finitely many smaller block lengths do not affect that asymptotic
statement.

To make the strong-converse conclusion explicit, consider a sequence of
such codes of dimension $M_n$, rate $r_n=n^{-1}\log M_n$, and actual
purified-distance error $\epsilon_n$. Write
$f_n:=F(\Phi,\xi_n)^2=1-\epsilon_n^2$.
{The $O(\sqrt n\log n)$ remainder in the preceding bound
can be chosen nonnegative and independently of the input state and
$\epsilon_n$.} The same inequality gives
\begin{align}
 f_n\leq\min\left\{1,\,
 2^{-\frac{n[r_n-Q^{(1)}(\cN)]-{O(\sqrt n\log n)}}{\sqrt n+1}}\right\}.
 \label{eq:original-proof-strong-converse}
\end{align}
If $\liminf_n r_n>Q^{(1)}(\cN)$, then ${O(\sqrt n\log n)/n\to0}$, so $f_n\to0$
and $\epsilon_n\to1$. The case $f_n=0$ already satisfies this bound.
This proves a strong-converse upper threshold for the specified code
class. Identifying its capacity with $Q^{(1)}(\cN)$ would additionally
require an achievability argument within that class.

\end{proof}

%%%%%%%%%%%%%%%%%%%%%%%%%%%%%%%%%%%%%%%%%%%%%%%%%%%%%%%%%%%%%%%%%%%%%%%%%%%%%%
\section{Additivity and bounds for quantum information-processing tasks}
\label{sec:applications}

The postselection method of Section~\ref{sec:coherent-bound} provides a
way to control correlations across channel uses through single-copy
states. Here we apply it to quantities that enter strong-converse bounds
for entanglement-assisted classical communication~\cite{GuptaWilde2015}
and discrimination between a channel and a replacer~\cite{CooneyMosonyiWilde2016}.
For $1<\alpha<\infty$, our objectives are multiplicativity of the
completely bounded $1\to\alpha$ norm and additivity of the sandwiched
R\'enyi channel divergence against a replacer and of channel R\'enyi
mutual information. These properties are known
\cite{DevetakJungeKingRuskai2006,Jencova2006,GuptaWilde2015,CooneyMosonyiWilde2016};
we give alternative proofs that identify the common structural features
responsible for them.

The three quantities admit a common weighted Schatten-norm expression,
which we denote by $\omega_\alpha$ and define below. Maximizing over the
input with the identity as output weight gives the completely bounded
norm. Fixing the weight to a density operator gives the divergence
against the corresponding replacer after input maximization and a
logarithmic rescaling. Minimizing over output density operators before
maximizing over the input gives the channel R\'enyi mutual information
with the same rescaling.

The functional factors on product inputs and output weights and is
concave in its input argument. Together with the permutation covariance
of every tensor-power map, this concavity allows symmetric inputs to be
selected in the relevant optimizations without restricting their values.
Postselection then bounds the contribution of correlated inputs by a
polynomial factor times values on tensor-power inputs. The commutation
of the universal postselection marginal with these tensor-power inputs
supplies the operator estimate needed for this bound. Repeating the
maps and taking roots removes the polynomial factor, proving exact
multiplicativity for fixed output weights and hence the first two
objectives. For channel R\'enyi mutual information, mixtures of
comparison states preserve the original max--inf optimization order;
additivity of state R\'enyi mutual information supplies the reverse
inequality~\cite{Beigi2013}. The same amplification argument then proves
channel additivity.

Minimum output R\'enyi entropy is a further application of this common
representation, obtained by restricting the input of $\omega_\alpha$
to pure states and applying a logarithmic rescaling. Averaging need not
preserve this purity constraint. In
Subsection~\ref{subsec:symmetric-minimum-output}, we therefore impose
permutation invariance on the channel input and apply postselection
directly. This gives a logarithmic bound on the total entropy reduction
due to correlations for fixed $\alpha>1$ and convergence of the minimum
output entropy per use to the single-use minimum, including at
$\alpha=1$.

\subsection{A common Choi representation}
\label{subsec:common-choi}

Let $\cM:A\to B$ be completely positive. Fix a basis of $A$ and its copy
$A'$, and let
\[
 |\Gamma\rangle^{A'A}=\sum_{i=1}^d|i\rangle^{A'}|i\rangle^A,
 \qquad J_\cM=(\id_{A'}\ox\cM)(|\Gamma\rangle\langle\Gamma|).
\]
We use the unnormalized maximally entangled vector; in particular,
$J_{\cM^{\ox n}}=J_\cM^{\ox n}$ after regrouping tensor factors.

\begin{lemma}\label{lem:choi-parametrization}
Every normalized pure $|\psi\rangle^{RA}$ can be written
\[
 |\psi\rangle=(V\rho_{A'}^{1/2}\ox\1_A)|\Gamma\rangle,
 \qquad \rho_{A'}=(\psi^A)^T,
\]
where $V$ is an isometry on $\operatorname{supp}\rho_{A'}$.
\end{lemma}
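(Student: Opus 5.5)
The argument is a direct construction from the Schmidt decomposition, with the operator $V$ read off from it. Write $|\psi\rangle^{RA}=\sum_{k=1}^r\sqrt{\lambda_k}\,|e_k\rangle^R|f_k\rangle^A$ with $\lambda_k>0$, orthonormal $\{e_k\}\subset R$ and $\{f_k\}\subset A$. Then $\psi^A=\sum_k\lambda_k|f_k\rangle\langle f_k|$. The transpose in the fixed basis gives $\rho_{A'}=(\psi^A)^T=\sum_k\lambda_k|\bar f_k\rangle\langle\bar f_k|^{A'}$, where $\bar f_k$ is the entrywise complex conjugate of $f_k$ in the fixed basis, transported to $A'$. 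The vectors $\{\bar f_k\}$ are orthonormal, so $\operatorname{supp}\rho_{A'}=\operatorname{span}\{\bar f_k\}$ and $\rho_{A'}^{1/2}=\sum_k\sqrt{\lambda_k}|\bar f_k\rangle\langle\bar f_k|$.

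The key identity is the transpose trick for the unnormalized vector $|\Gamma\rangle$: for every operator $X$ on $A$, $(\1_{A'}\ox X)|\Gamma\rangle=(X^T\ox\1_A)|\Gamma\rangle$. A consequence is that $|\Gamma\rangle=\sum_j|\bar g_j\rangle|g_j\rangle$ for any orthonormal basis $\{g_j\}$ of $A$. I will verify this by expanding $|\Gamma\rangle=\sum_i|i\rangle|i\rangle$ and inserting $\1=\sum_j|g_j\rangle\langle g_j|$ on the second factor. Completing $\{f_k\}$ to a basis $\{f_j\}_{j=1}^d$, we get
\begin{align*}
(\rho_{A'}^{1/2}\ox\1_A)|\Gamma\rangle=\sum_{j}\rho_{A'}^{1/2}|\bar f_j\rangle\ox|f_j\rangle=\sum_{k=1}^r\sqrt{\lambda_k}\,|\bar f_k\rangle|f_k\rangle .
\end{align*}

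Now define $V:\operatorname{supp}\rho_{A'}\to R$ by $V|\bar f_k\rangle=|e_k\rangle$. It maps an orthonormal basis to an orthonormal set, so it is an isometry on the support. Applying $V\ox\1_A$ to the display above gives $\sum_k\sqrt{\lambda_k}|e_k\rangle|f_k\rangle=|\psi\rangle$, as claimed.

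The only delicate point is bookkeeping. $V\rho_{A'}^{1/2}$ is meaningful because $\rho_{A'}^{1/2}$ has range in the support, so $V$ need only be defined there; if one wants $V$ on all of $A'$, it can be extended arbitrarily, for example by zero, or to an isometry when $\dim R\ge d$. The other point is the conjugation convention, which must match the transpose in the fixed basis defining $|\Gamma\rangle$. I will check this against the identity $\Tr_{A'}[(\rho_{A'}^{1/2}\ox\1)|\Gamma\rangle\langle\Gamma|(\rho_{A'}^{1/2}\ox\1)]=\rho_{A'}^T=\psi^A$ as a consistency test. Beyond these points there is no real obstacle.
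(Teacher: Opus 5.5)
Your proof is correct and is essentially the paper's argument. The paper writes $|\psi\rangle=(X\otimes\1)|\Gamma\rangle$ and takes the polar decomposition $X=V\sqrt{X^\dagger X}$; your Schmidt decomposition is the singular value decomposition $X=\sum_k\sqrt{\lambda_k}\,|e_k\rangle\langle\bar f_k|$, and your $V=\sum_k|e_k\rangle\langle\bar f_k|$ is exactly the polar partial isometry, with your explicit computation also supplying the identification $X^\dagger X=(\psi^A)^T=\rho_{A'}$ that the paper leaves implicit.
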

\begin{proof}
To see this, express $|\psi\rangle=(X\ox\1)|\Gamma\rangle$ and
use the polar decomposition $X=V\sqrt{X^\dagger X}$.
The same argument applies to $n$ copies, allowing a general correlated
$\rho_{{A'}^n}$; it imposes no product assumption on the input.
\end{proof}

For $1<\alpha<\infty$, a density operator $\rho$ on $A'$, and a positive
 definite operator $\sigma_B$, define the functional
\begin{align}
 \omega_\alpha(\rho,\sigma,\cM)
 &:=\left\|(\rho^{1/(2\alpha)}\ox\sigma^{(1-\alpha)/(2\alpha)})
 J_\cM(\rho^{1/(2\alpha)}\ox\sigma^{(1-\alpha)/(2\alpha)})\right\|_\alpha
 \nonumber\\
 &=\left\|J_\cM^{1/2}
       (\rho^{1/\alpha}\ox\sigma^{(1-\alpha)/\alpha})
       J_\cM^{1/2}\right\|_\alpha.
 \label{eq:common-omega}
\end{align}
Here $J_\cM=\ChM$. The second equality follows because $XX^\dagger$ and
$X^\dagger X$ have the same nonzero eigenvalues; the exponent on $\sigma$
is $(1-\alpha)/\alpha$ in this second expression. Normalization of $\sigma$
is not required, so the choice $\sigma=\1_B$ is allowed.
For a channel $\cN$, the Choi parametrization and cancellation of the
reference powers give
\begin{align}
 \widetilde D_\alpha((\id_R\ox\cN)(\psi)
                          \mids\psi_R\ox\sigma_B)
 =\frac{\alpha}{\alpha-1}\log\omega_\alpha(\rho,\sigma,\cN).
 \label{eq:common-divergence}
\end{align}
%The coefficient of the logarithm is $\alpha/(\alpha-1)$.
The four quantities studied here have the following common form:
\begin{align}
 \|\cM\|_{\CB,1\to\alpha}
 &=\max_{\rho\in\cD(A')}\omega_\alpha(\rho,\1_B,\cM),
 \label{eq:omega-cb-identification}\\
 \widetilde D_\alpha(\cN\mids\cR_\sigma)
 &=\frac{\alpha}{\alpha-1}\log
       \max_{\rho\in\cD(A')}\omega_\alpha(\rho,\sigma,\cN),
 \label{eq:omega-replacer-identification}\\
 \widetilde I_\alpha(\cN)
 &=\frac{\alpha}{\alpha-1}\log
       \max_{\rho\in\cD(A')}\inf_{\substack{\sigma>0\\\Tr\sigma=1}}
                \omega_\alpha(\rho,\sigma,\cN),
 \label{eq:omega-mi-identification}\\
 h_\alpha(\cN)
 &=\frac{\alpha}{1-\alpha}\log
       \max_{\substack{\rho\in\cD(A')\\\operatorname{rank}\rho=1}}
                \omega_\alpha(\rho,\1_B,\cN).
 \label{eq:omega-minimum-output-identification}
\end{align}
In the first identity $\cM$ is completely positive; in the remaining
identities $\cN$ is trace preserving as well. The comparison operators
in the second and third identities are density operators.
The first identity is the variational CB-norm formula
\cite{DevetakJungeKingRuskai2006}; the second and third follow from
Eq.~\eqref{eq:common-divergence}
\cite{CooneyMosonyiWilde2016,GuptaWilde2015}.
For the first three quantities, CB-norm multiplicativity was proved in
\cite[Theorem~11]{DevetakJungeKingRuskai2006}, and additivity of channel
R\'enyi mutual information was proved in \cite[Lemma~6]{GuptaWilde2015}.
Additivity of the divergence against a replacer follows from the former
result and the CB-norm identity in \cite[Lemma~8]{CooneyMosonyiWilde2016}.
Jen\v{c}ov\'{a} also gave an alternative proof of CB $1\to\alpha$-norm
multiplicativity for quantum channels by relating it to the
$\alpha\to\alpha$ Schatten norm of the conjugate channel~\cite{Jencova2006}.
Here we provide alternative proofs of these three properties using
$\omega_\alpha$ and the postselection method of Section~\ref{sec:coherent-bound}.
The fourth quantity is the minimum output R\'enyi entropy,
$h_\alpha(\cN):=\min_{\tau\in\cD(A)}S_\alpha(\cN(\tau))$. % where $S_\alpha(\eta)=(1-\alpha)^{-1}\log\Tr\eta^\alpha$.
Its pure-input identity and the resulting bounds for permutation-invariant
inputs are proved in Subsection~\ref{subsec:symmetric-minimum-output}.
An infimum over strictly positive comparison states includes singular
limits; the support convention and regularization are specified below.
For fixed output weight, write
\begin{align}
 W_\alpha(\cM,\sigma)
 :=\max_{\rho\in\cD(A')}\omega_\alpha(\rho,\sigma,\cM).
 \label{eq:optimized-omega}
\end{align}
The maximum exists by continuity in $\rho$. Directly from the Choi tensor
product and multiplicativity of the Schatten norm,
\begin{align}
 \omega_\alpha(\rho_1\ox\rho_2,\sigma_1\ox\sigma_2,
                         \cM_1\ox\cM_2)
 =\omega_\alpha(\rho_1,\sigma_1,\cM_1)
          \omega_\alpha(\rho_2,\sigma_2,\cM_2).
 \label{eq:omega-product-inputs}
\end{align}
It remains to control the maximum over correlated inputs. We first use
covariance to choose an invariant optimizer, then apply the postselection
and finite-mixture argument of Section~\ref{sec:coherent-bound}.

\subsection{Covariance and invariant optimizers}

\begin{definition}\label{def: covariant}
Let $G$ be a compact group, and
for every $g\in G$, let $g\rightarrow U_{A}(g)  $ and
$g\rightarrow V_{B}(g)  $ be unitary representations of $G$ in $\mathcal{H}_{A}$ and $\mathcal{H}_{B}$,
respectively. Then a CPTP map $\mathcal{N}: {A\rightarrow B}$ is said to be
covariant with respect to these representations if the following relation
holds for all $g\in G$ and input density operators $\rho$:%
\begin{equation}
\mathcal{N} \!\left(  U_{A}(g)  \rho U_{A}^{\dag
}(g)  \right)  =V_{B}(g)  \mathcal{N}(  \rho)  V_{B}^{\dag}(g).
\end{equation}
\end{definition}

\begin{lemma}\label{lemma: optimal state of omega}
Let $1<\alpha<\infty$, let $\rho_{A'}$ be a density operator,
and assume $\sigma_B>0$; $\sigma_B$ need not have unit trace.
Let $\cM: A \to B$ be a covariant channel as in Definition~\ref{def: covariant}, and let $\sigma_B$ be invariant under the group action, i.e., $\sigma_B=V_B(g)^{\dagger} \sigma_B V_B(g)$ for all $g\in G$. Then,
\begin{align}
 \omega_{\alpha}\left(U_A^*(g)\rho_{A'} U_A^T(g), \sigma_B ,\cM \right)=\omega_{\alpha}(\rho_{A'},\sigma_B,\cM). \nonumber
\end{align}
By concavity of the function with respect to $\rho_{A'}$, we obtain
\begin{align}
 \omega_{\alpha}(\rho_{A'},\sigma_B,\cM) \leq \omega_{\alpha}\left(\sum_g p_g U_A^*(g)\rho_{A'} U_A(g)^{T},\sigma_B,\cM \right). \nonumber
\end{align}
Here $\{p_g\}$ is any finitely supported probability distribution
on $G$. The same conclusions hold for a completely positive map satisfying
the covariance identity, without requiring trace preservation.
%For a compact group, normalized Haar averaging gives
%\begin{align}
 %\bar\rho=\int_G U_A^*(g)\rho U_A^T(g)\,dg,
% \qquad \omega_\alpha(\rho,\sigma,\cM)
 %      \leq\omega_\alpha(\bar\rho,\sigma,\cM).
% \label{eq:omega-haar-optimizer}
%\end{align}
%The state $\bar\rho$ is invariant under $U_A^*$, so an invariant optimizer can be chosen.
\end{lemma}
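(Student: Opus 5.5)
The plan has two parts: an invariance identity, and a concavity statement that, combined with it, yields the averaging inequality.

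\emph{Invariance.} I would start from the second expression in Eq.~\eqref{eq:common-omega} and compute how the Choi operator transforms under the group. Using the transpose trick $(X\ox\1)|\Gamma\rangle=(\1\ox X^T)|\Gamma\rangle$, we have $(U^*(g)\ox\1)|\Gamma\rangle=(\1\ox U^\dagger(g))|\Gamma\rangle$. Covariance of $\cM$ then gives
\[
 (U^*(g)\ox V(g))\,J_\cM\,(U^T(g)\ox V^\dagger(g))
 =(\id\ox\cM)\bigl((\1\ox U^\dagger)(|\Gamma\rangle\langle\Gamma|)(\1\ox U)\bigr)\text{ conjugated by }V,
\]
which equals $J_\cM$. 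Only the covariance identity enters here, so trace preservation is not needed and the argument covers completely positive maps as well.

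Now write $\rho_g:=U^*(g)\rho U^T(g)$. Functional calculus gives $\rho_g^{1/(2\alpha)}=U^*(g)\rho^{1/(2\alpha)}U^T(g)$. Invariance of $\sigma$ gives $\sigma^{(1-\alpha)/(2\alpha)}=V(g)\sigma^{(1-\alpha)/(2\alpha)}V^\dagger(g)$. Substituting into the first expression in Eq.~\eqref{eq:common-omega} shows that the operator inside the norm for $\rho_g$ is the operator for $\rho$ conjugated by the unitary $U^*(g)\ox V(g)$. Unitary invariance of $\|\cdot\|_\alpha$ then yields the first claim.

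\emph{Concavity.} The second claim follows from concavity of $\rho\mapsto\omega_\alpha(\rho,\sigma,\cM)$ combined with the first claim, since
\[
 \omega_\alpha\Bigl(\sum_g p_g\rho_g\Bigr)\geq\sum_g p_g\,\omega_\alpha(\rho_g)=\omega_\alpha(\rho).
\]
So the main obstacle is proving concavity in $\rho$. I would write
\[
 \omega_\alpha(\rho,\sigma,\cM)=\bigl\|K(\rho^{1/\alpha}\ox\1)K^\dagger\bigr\|_\alpha,
 \qquad K:=J_\cM^{1/2}(\1\ox\sigma^{(1-\alpha)/(2\alpha)}),
\]
and define $\Phi(X):=K(X\ox\1)K^\dagger$, which is a positive linear map. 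Then $\omega_\alpha(\rho)=\|\Phi(\rho^{1/\alpha})\|_\alpha=(\Tr\,\Phi(\rho^{1/\alpha})^\alpha)^{1/\alpha}$.

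The map $\rho\mapsto(\Tr\,\Phi(\rho^{1/p})^p)^{1/p}$ with $p=\alpha>1$ is jointly concave. This is exactly Carlen--Lieb / Epstein-type concavity: the map $A\mapsto\Tr(B^*A^{q}B)^{1/q}$ is concave for $0<q\leq1$, here with $q=1/\alpha$. To bring it into that form, I would rewrite it as $\Tr\,(\Phi(\rho^{1/\alpha}))^\alpha$ via $XX^\dagger\leftrightarrow X^\dagger X$. If I cannot cite this directly, I would fall back on the variational formula
\[
 \|Y\|_\alpha=\min_{\tau>0,\ \Tr\tau=1}\Tr\bigl(Y^{1/2}\tau^{(1-\alpha)/\alpha}Y^{1/2}\bigr)^{\alpha}\text{-type duality},
\]
or use the Devetak--Junge--King--Ruskai observation that $\omega_\alpha$ in this form is concave in $\rho$ (as in~\cite{DevetakJungeKingRuskai2006}). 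This step is the only nontrivial ingredient, and the rest is bookkeeping.
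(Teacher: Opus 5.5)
Your proposal is correct and follows essentially the paper's route. The identity uses the transpose trick, covariance, invariance of $\sigma_B$ and unitary invariance of $\|\cdot\|_\alpha$; you merely package these as the invariance $(U^*(g)\ox V(g))J_\cM(U^T(g)\ox V^\dagger(g))=J_\cM$ before substituting. Concavity comes from Epstein's theorem applied to $\Tr[K(\rho\ox\1)^{1/\alpha}K^\dagger]^{\alpha}$, with the same $K$ as the paper up to adjoint, followed by averaging.

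Do state explicitly that concavity of $\omega_\alpha$ itself then follows by composing with the increasing concave map $x\mapsto x^{1/\alpha}$. Your fallback variational formula is not correct as written (its value is $(\Tr Y)^\alpha$, not $\|Y\|_\alpha$), but it is not needed.
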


\begin{proof}
The Schatten norm is invariant under unitary conjugation, i.e., $\norm{UXU^{\dagger}}_p=\norm{X}_p$. We apply this in the second line below
\begin{align}
 \omega_{\alpha}\left(U_A^* (g) \rho_{A'} U_A^{T}(g), \sigma_B,\cM\right)
 &=\norm{ \left( U_A^* (g)\rho_{A'}^{\frac{1}{2\alpha}}U_A^T (g)  \ox  \sigma_B^{\frac{1-\alpha}{2\alpha}} \right)  \ChM   \left( U_A^* (g)\rho_{A'}^{\frac{1}{2\alpha}}U_A^T (g) \ox  \sigma_B^{\frac{1-\alpha}{2\alpha}}\right)}_{\alpha}  \\
 &=\norm{ \left( \rho_{A'}^{\frac{1}{2\alpha}}U_A^T (g)  \ox  \sigma_B^{\frac{1-\alpha}{2\alpha}} \right)  \ChM   \left( U_A^* (g)\rho_{A'}^{\frac{1}{2\alpha}}\ox  \sigma_B^{\frac{1-\alpha}{2\alpha}}\right)}_{\alpha}  \\
 &=\norm{ \left( \rho_{A'}^{\frac{1}{2\alpha}}  \ox  \sigma_B^{\frac{1-\alpha}{2\alpha}}  V_{B}(g)\right)  \ChM   \left( \rho_{A'}^{\frac{1}{2\alpha}}\ox  V_{B}(g)^{\dagger}\sigma_B^{\frac{1-\alpha}{2\alpha}}\right)}_{\alpha}  \\
 &=\norm{ \left( \rho_{A'}^{\frac{1}{2\alpha}}  \ox V_{B}(g)^{\dagger} \sigma_B^{\frac{1-\alpha}{2\alpha}}  V_{B}(g)\right)  \ChM   \left( \rho_{A'}^{\frac{1}{2\alpha}}\ox  V_{B}(g)^{\dagger}\sigma_B^{\frac{1-\alpha}{2\alpha}} V_{B}(g) \right)}_{\alpha}  \\
 &=\omega_{\alpha}\left( \rho_{A'} , \sigma_B,\cM\right),
 %&=\omega_{\alpha}\left( \rho_{A'} ,V_B(g)^{\dagger} \sigma_B V_B(g),\cM\right)
 %&=\norm{ (\ChM)^{\frac{1}{2}}   \left( \rho_{A'}^{\frac{1}{\alpha}}  \ox   \sigma_B^{\frac{1-\alpha}{2\alpha}}\right)  (\ChM)^{\frac{1}{2}}    }_{\alpha}\\
\end{align}
where in the third line we apply the covariance property of the channel as well as the following property for a maximally entangled state: $(X\ox \1_A)\ket{\Phi}^{A'A}=( \1_{A'} \ox X^T)\ket{\Phi}^{A'A}$.
In the fourth line we again apply the invariance of the norm under unitary conjugation. 
The last line is due to the assumption $\sigma_B=V_B(g)^{\dagger} \sigma_B V_B(g)$ for all $g\in G$.

The  concavity  follows from Epstein's theorem
\cite{Epstein1973}. Put $r=1/\alpha$ and
$K=(\1_{A'}\ox\sigma^{(r-1)/2})J_\cM^{1/2}$. Then
\[
 \omega_\alpha(\rho,\sigma,\cM)^\alpha
 =\Tr\bigl[K^\dagger(\rho^r\ox\1_B)K\bigr]^{1/r}
\]
is concave in $\rho$. Taking the increasing concave power $r$ proves
concavity of $\omega_\alpha$. This uses only $J_\cM\geq0$, establishing
the completely positive extension.

By the above equality and that $\omega_{\alpha}$ is a concave function of  $\rho_{A'}$. 
We obtain
\begin{align}
\sum_g p_g \; \omega_{\alpha}(\rho_{A'},\sigma_B,\cM) &=\sum_g p_g \; \omega_{\alpha}\left(U_A^* (g) \rho_{A'} U_A^{T}(g), \sigma_B,\cM\right) \\
 & \leq \omega_{\alpha}\left(\sum_g p_g U_A(g)^*\rho_{A'} U_A(g)^{T},\sigma_B,\cM \right).
\end{align}

We note that $g\rightarrow U_{A}^*(g) $ is also a representation of group $G$.
%The same concavity inequality holds for the Haar integral.
%Left invariance of Haar measure makes $\bar\rho$ invariant under this
%representation, proving the optimizer assertion.

\end{proof}

\begin{corollary}\label{lemma: permutation invariant optimizer}
Let $\cM:A\to B$ be completely positive and let $\Sigma_n>0$ on $B^n$
be permutation invariant. For $\alpha>1$, a maximizing input in
\[
 \max_{\rho_n}\>\omega_\alpha(\rho_n,\Sigma_n,\cM^{\ox n})
\]
can be chosen permutation invariant. In particular this applies to
$\Sigma_n=\sigma^{\ox n}$ for every $\sigma>0$, including $\sigma=\1_B$.
\end{corollary}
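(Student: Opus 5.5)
We apply Lemma~\ref{lemma: optimal state of omega} with $G=S_n$ acting by permutations of the $n$ tensor factors, and then average over the group. Take $U_{A^n}(\pi)$ to be the permutation unitary on $A^n$ and $V_{B^n}(\pi)$ the permutation unitary on $B^n$. The first step is to check that $\cM^{\ox n}$ is covariant in the sense of Definition~\ref{def: covariant}:
\[
\cM^{\ox n}\bigl(U(\pi)XU(\pi)^\dagger\bigr)=V(\pi)\,\cM^{\ox n}(X)\,V(\pi)^\dagger .
\]
By linearity it suffices to verify this on product operators $X=X_1\ox\cdots\ox X_n$, where both sides equal $\cM(X_{\pi^{-1}(1)})\ox\cdots\ox\cM(X_{\pi^{-1}(n)})$. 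The lemma explicitly allows completely positive maps, so trace preservation is not needed. The hypothesis $\Sigma_n=V(\pi)^\dagger\Sigma_n V(\pi)$ is exactly the assumed permutation invariance of $\Sigma_n$. Positive definiteness $\Sigma_n>0$ is also assumed, as the lemma requires.

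Next, let $\rho_n$ be any maximizer; one exists because $\omega_\alpha$ is continuous in $\rho$ on the compact set $\cD({A'}^n)$. Form the uniform average over the finite group,
\[
\bar\rho_n=\frac{1}{n!}\sum_{\pi\in S_n}U^*(\pi)\,\rho_n\,U^T(\pi).
\]
Permutation unitaries are real in the product computational basis, so $U^*(\pi)=U(\pi)$ and $U^T(\pi)=U(\pi)^\dagger=U(\pi^{-1})$. Hence $\bar\rho_n$ is the ordinary twirl of $\rho_n$. Reindexing the sum over the group shows that $\bar\rho_n$ is permutation invariant and is again a density operator. Since $p_\pi=1/n!$ is finitely supported, Lemma~\ref{lemma: optimal state of omega} gives
\[
\omega_\alpha(\rho_n,\Sigma_n,\cM^{\ox n})\le\omega_\alpha(\bar\rho_n,\Sigma_n,\cM^{\ox n}).
\]
Maximality of $\rho_n$ then forces equality, so $\bar\rho_n$ is a permutation-invariant maximizer.

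For the final sentence of the statement, note that $\sigma^{\ox n}$ is positive definite whenever $\sigma>0$ and commutes with every $V(\pi)$, so it satisfies the hypotheses above. The case $\sigma=\1_B$ gives $\Sigma_n=\1_{B^n}$, which is trivially invariant. The lemma permits unnormalized weights, so this case is covered as well.

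The only point needing care is matching conventions. The Choi matrix uses the copy $A'$, and the lemma's action on $A'$ is $U^*(\pi)\,\cdot\,U^T(\pi)$, so I must confirm that this is the permutation action on ${A'}^n$. This again follows from reality of the permutation matrices. I also need the regrouping $J_{\cM^{\ox n}}=J_\cM^{\ox n}$ to be consistent with the simultaneous permutation of the $A'$ and $B$ factors. Everything else reduces directly to the earlier lemma.
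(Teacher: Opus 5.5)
Your proposal is correct and follows the paper's proof: you verify permutation covariance of $\cM^{\ox n}$, use the reality of permutation matrices to identify the lemma's $U^*(\pi)\,\cdot\,U^T(\pi)$ action with the ordinary twirl, and apply Lemma~\ref{lemma: optimal state of omega} to the uniform average $\bar\rho_n$. Your extra checks spell out what the paper leaves implicit: existence of a maximizer by compactness, equality forced by maximality, and the hypotheses for $\Sigma_n=\sigma^{\ox n}$ and $\sigma=\1_B$.
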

\begin{proof}
Every tensor-power map is covariant under simultaneous permutations:
\[
 \cM^{\ox n}(P_A(\pi)X P_A(\pi)^\dagger)
 =P_B(\pi)\cM^{\ox n}(X)P_B(\pi)^\dagger,
 \qquad \pi\in S_n.
\]
The permutation matrices are real in the chosen tensor-product basis.
Lemma~\ref{lemma: optimal state of omega}, applied to the uniform average
$\bar\rho_n=(n!)^{-1}\sum_{\pi\in S_n}P_{A'}(\pi)\rho_nP_{A'}(\pi)^\dagger$,
therefore gives the assertion. No covariance assumption on the single-use
map is needed.
\end{proof}

\subsection{Tensor-power reduction by postselection}

\begin{proposition}
\label{prop:omega-polynomial}
Let $\cM:A\to B$ be completely positive, $d=|A|$, $\alpha>1$, and $n\geq1$.
Write $g_{n,d}=\binom{n+d^2-1}{d^2-1}$ for the postselection factor from
Section~\ref{sec:coherent-bound}.
If $\Sigma_n>0$ on $B^n$ is permutation invariant, then
\begin{align}
 \max_{\rho_n\in\cD({A'}^n)}
       \omega_\alpha(\rho_n,\Sigma_n,\cM^{\ox n})
 &\leq g_{n,d}
       \max_{\tau\in\cD(A')}
       \omega_\alpha(\tau^{\ox n},\Sigma_n,\cM^{\ox n}),
 \label{eq:omega-tensor-power-reduction}
\end{align}
In particular, for every fixed $\sigma>0$,
\begin{align}
 W_\alpha(\cM^{\ox n},\sigma^{\ox n})
 &\leq g_{n,d}
       \max_{\tau\in\cD(A')}
       \omega_\alpha(\tau^{\ox n},\sigma^{\ox n},\cM^{\ox n})
 \nonumber\\
 &=g_{n,d}W_\alpha(\cM,\sigma)^n.
 \label{eq:omega-polynomial}
\end{align}
For two completely positive maps $\cM_i:A_i\to B_i$ and weights $\sigma_i>0$,
\begin{align}
 &W_\alpha(\cM_1^{\ox n}\ox\cM_2^{\ox n},
                 \sigma_1^{\ox n}\ox\sigma_2^{\ox n})\nonumber\\
 &\qquad\leq g_{n,|A_1|}g_{n,|A_2|}
       W_\alpha(\cM_1,\sigma_1)^nW_\alpha(\cM_2,\sigma_2)^n.
 \label{eq:omega-polynomial-two}
\end{align}
\end{proposition}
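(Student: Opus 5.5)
The plan is to combine three earlier ingredients: the invariant-optimizer result (Corollary~\ref{lemma: permutation invariant optimizer}), the postselection bound of Section~\ref{sec:coherent-bound}, and the classical-label argument from the proof of Lemma~\ref{lemma: H(rho) < H(varphi_l)}.

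\emph{Step 1 (symmetrize).} By Corollary~\ref{lemma: permutation invariant optimizer}, the maximum on the left of \eqref{eq:omega-tensor-power-reduction} is attained at a permutation-invariant $\rho_n$. Its symmetric purification lies in $\Sym^n(A'\ox R)$. Tracing out $R^n$ in Lemma~\ref{lemma: post-selection} gives the operator inequality
\[
\rho_n\leq g_{n,d}\,\bar\tau_n,\qquad \bar\tau_n:=\int\tau_\varphi^{\ox n}\,d\varphi=\sum_{l=1}^L q_l\,\tau_l^{\ox n},
\]
where $\tau_\varphi$ is the single-copy marginal of $\varphi$, and the finite decomposition is obtained by Carath\'eodory's theorem as in Section~\ref{sec:coherent-bound}.

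\emph{Step 2 (monotonicity).} Since $0<1/\alpha<1$, the map $x\mapsto x^{1/\alpha}$ is operator monotone, so $\rho_n^{1/\alpha}\leq g_{n,d}^{1/\alpha}\bar\tau_n^{1/\alpha}$. In the second form of \eqref{eq:common-omega}, $\omega_\alpha$ is the $\alpha$-norm of $J^{1/2}(X\ox\Sigma^{(1-\alpha)/\alpha})J^{1/2}$. This operator is positive and monotone in $X\geq0$, and the Schatten norm is monotone on positive operators. Hence
\[
\omega_\alpha(\rho_n,\Sigma_n,\cM^{\ox n})\leq g_{n,d}^{1/\alpha}\,\omega_\alpha(\bar\tau_n,\Sigma_n,\cM^{\ox n}),
\]
and $g_{n,d}^{1/\alpha}\leq g_{n,d}$.

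\emph{Step 3 (remove the mixture).} This is where the main obstacle lies, because concavity of $\omega_\alpha$ in its input points the wrong way. I would instead use the divergence form \eqref{eq:common-divergence}, which needs a generalization to an unnormalized, positive definite comparison $\Sigma_n$ and to a completely positive $\cM$. That generalization follows from the same Choi parametrization (Lemma~\ref{lem:choi-parametrization}), since neither normalization is used in that computation.

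The label trick goes as follows. Adjoin a classical register $Y$ and purify $\bar\tau_n$ as $\sum_l\sqrt{q_l}\,|\psi_l\rangle^{\ox n}|l\rangle|l\rangle$, with the first copy of $|l\rangle$ held in the reference and the second in $Y$. Apply the map with $Y$ kept, and use $\bar\tau_{R}\ox\Sigma_n$ as comparison operator, where $\bar\tau_R$ denotes the reference marginal of this purification. The quantity $\omega_\alpha(\bar\tau_n,\Sigma_n,\cM^{\ox n})^{\alpha/(\alpha-1)}$ equals $\exp$ of a sandwiched divergence of the form \eqref{eq:common-divergence}. Data processing under the partial trace over the reference label bounds it by the corresponding divergence of the classical--quantum state $\sum_l q_l\,\omega_l\ox|l\rangle\langle l|$ against $\sum_l q_l\,(\tau_l^{\ox n})_R\ox\Sigma_n\ox|l\rangle\langle l|$. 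The weights $q_l$ cancel between state and comparison, so this divergence equals $\log\sum_l q_l\,\omega_\alpha(\tau_l^{\ox n},\Sigma_n,\cM^{\ox n})^\alpha$, times the prefactor from \eqref{eq:common-divergence}. The argument is the same as for Lemma~\ref{lemma: D < max_y D_y}, and the result is at most the $\max_l$. I expect the care to be needed exactly here, in checking the data-processing direction with the reference marginals. Combined with Step~2, this gives \eqref{eq:omega-tensor-power-reduction}.

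\emph{Step 4 (consequences).} For $\Sigma_n=\sigma^{\ox n}$, the product rule \eqref{eq:omega-product-inputs} gives $\omega_\alpha(\tau^{\ox n},\sigma^{\ox n},\cM^{\ox n})=\omega_\alpha(\tau,\sigma,\cM)^n$, and maximizing over $\tau$ yields \eqref{eq:omega-polynomial}.

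For \eqref{eq:omega-polynomial-two}, I would not treat $\cM_1\ox\cM_2$ as one map, since that would give the factor $g_{n,|A_1||A_2|}$ instead of the product. Instead I use covariance under $S_n\times S_n$, permuting the two blocks independently; Lemma~\ref{lemma: optimal state of omega} still applies, so there is an optimizer invariant in each block separately. Applying postselection block by block gives
\[
\rho\leq g_{n,|A_1|}\,g_{n,|A_2|}\iint\tau_1^{\ox n}\ox\tau_2^{\ox n},
\]
which I obtain by applying the one-block bound to $A_1^n$ with $A_2^n$ treated as part of the reference, and then again for the second block. Steps~2 and~3 then go through with a double label $(l_1,l_2)$. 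Finally, \eqref{eq:omega-product-inputs} factorizes each term into $W_\alpha(\cM_1,\sigma_1)^nW_\alpha(\cM_2,\sigma_2)^n$.
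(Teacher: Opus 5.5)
Step~3 fails, in exactly the direction you flagged. Steps~1 and~2 are fine.

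Discarding or dephasing the reference copy of the label is a channel acting on both the output state and the comparison operator $\bar\tau_R\ox\Sigma_n$. Data processing therefore gives
\[
 \frac{1}{\alpha-1}\log\sum_l q_l\,\omega_\alpha(\tau_l^{\ox n},\Sigma_n,\cM^{\ox n})^\alpha
 \leq\frac{\alpha}{\alpha-1}\log\omega_\alpha(\bar\tau_n,\Sigma_n,\cM^{\ox n}).
\]
This is a \emph{lower} bound on the mixture's value, consistent with concavity.

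In Lemma~\ref{lemma: H(rho) < H(varphi_l)} the flag could be adjoined for free because the comparison operator there is $\1^{R^n}\ox\overline{\upsilon}^{B^n}$. The unflagged quantity is then the image of the flagged one under discarding $Y$. In your construction the roles are reversed: the comparison contains the reference marginal of the state itself, so the flagged cq quantity is the image of the quantity you want to bound.

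The intermediate claim $\omega_\alpha(\bar\tau_n,\Sigma_n,\cM^{\ox n})\leq\max_l\omega_\alpha(\tau_l^{\ox n},\Sigma_n,\cM^{\ox n})$ is actually false for general mixtures of tensor powers. For $\cM=\id_d$ and $\Sigma_n=\1$ one has $\omega_\alpha(\rho,\1,\id_d^{\ox n})=\Tr\rho^{1/\alpha}$. The mixture $\frac1d\sum_{l=1}^d(|l\rangle\langle l|)^{\ox n}$ then gives $d^{1-1/\alpha}$, while every component gives $1$. So any argument using only the mixture structure must pay a cost of order $L^{1-1/\alpha}$.

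What is missing is an operator inequality that survives the congruence by $J^{1/2}$, namely $\bar\tau_n^{r}\leq\sum_l q_l^{r}(\tau_l^{r})^{\ox n}$ with $r=1/\alpha$. Since $x\mapsto x^r$ is not operator subadditive on noncommuting arguments, the paper obtains this from a special property of your $\bar\tau_n$. It is exactly the universal postselection marginal $T_{n,d}$, and it satisfies $[T_{n,d},\tau^{\ox n}]=0$ (Lemma~\ref{lem:postselection-marginal}). The argument runs as follows.
\begin{itemize}
\item Put $B_l=q_l\tau_l^{\ox n}$. Then $0\leq B_l\leq T_{n,d}$ and $[B_l,T_{n,d}]=0$, so $T_{n,d}^{r-1}B_l\leq B_l^r$ in a common eigenbasis.
\item Summing over $l$ gives $T_{n,d}^r\leq\sum_lB_l^r$.
\item The Schatten triangle inequality then gives $\omega_\alpha(T_{n,d},\Sigma_n,\cM^{\ox n})\leq\sum_lq_l^r\,\omega_\alpha(\tau_l^{\ox n},\Sigma_n,\cM^{\ox n})\leq L^{1-r}\max_\tau\omega_\alpha(\tau^{\ox n},\Sigma_n,\cM^{\ox n})$.
\end{itemize}

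To reach the prefactor $g_{n,d}^rL^{1-r}\leq g_{n,d}$, you also need the sharper Carath\'eodory count $L\leq g_{n,d}$, obtained after tracing out the reference. The count $L\leq g_{n,d}^2$ from Section~\ref{sec:coherent-bound}, which your Step~1 invokes, would only give $g_{n,d}^{2-1/\alpha}$.

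For the two-map bound, your sequential ``treat $A_2^n$ as reference'' step does not by itself produce a product of marginals. A cleaner route to $\rho\leq g_{n,|A_1|}g_{n,|A_2|}T_{n,|A_1|}\ox T_{n,|A_2|}$ is to note that the canonical purification lies in $\Sym^n(R_1A_1')\ox\Sym^n(R_2A_2')$. It is therefore dominated by the product of the two symmetric projectors. The product ensemble then commutes termwise with $T_{n,|A_1|}\ox T_{n,|A_2|}$, and the same estimate applies.
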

\begin{proof}
By Corollary~\ref{lemma: permutation invariant optimizer}, choose a
permutation-invariant maximizing $\rho_n$. Set $r=1/\alpha\in(0,1)$ and
$g=g_{n,d}$, and let $T=T_{n,d}>0$ be the universal postselection marginal.
Lemma~\ref{lem:postselection-marginal} in
Appendix~\ref{app:supporting-inequalities} gives $\rho_n\leq gT$ and a
decomposition $T=\sum_iq_i\tau_i^{\ox n}$ with $L\leq g$, in which
$T$ commutes with every $\tau_i^{\ox n}$.
The operator monotonicity of $x^r$ therefore implies
$\rho_n^r\leq g^rT^r$.
To estimate $T^r$, put $B_i=q_i\tau_i^{\ox n}$.
Since $0\leq B_i\leq T$ and $[B_i,T]=0$, scalar functional calculus
in their common eigenbasis gives $T^{r-1}B_i\leq B_i^r$.
This uses commutation with $T$ separately for each term; the terms need
not commute with one another. Summing gives
\begin{align}
 \rho_n^r
 \leq g^rT^r
 =g^r\sum_iT^{r-1}B_i
 \leq g^r\sum_iq_i^r(\tau_i^r)^{\ox n}.
 \label{eq:postselection-power-step}
\end{align}
Tensor this order inequality with the positive operator $\Sigma_n^{r-1}$,
take the congruence by $J_{\cM^{\ox n}}^{1/2}$, and use monotonicity of
the Schatten norm on positive operators and its triangle inequality.
Equation~\eqref{eq:common-omega} then yields
\begin{align}
 \omega_\alpha(\rho_n,\Sigma_n,\cM^{\ox n})
 &\leq g^r\sum_iq_i^r
       \omega_\alpha(\tau_i^{\ox n},\Sigma_n,\cM^{\ox n})\nonumber\\
 &\leq g^rL^{1-r}\max_{\tau\in\cD(A')}
       \omega_\alpha(\tau^{\ox n},\Sigma_n,\cM^{\ox n})\nonumber\\
 &\leq g\max_{\tau\in\cD(A')}
       \omega_\alpha(\tau^{\ox n},\Sigma_n,\cM^{\ox n}),
\end{align}
where $\sum_iq_i^r\leq L^{1-r}$ follows from scalar concavity.
This proves Eq.~\eqref{eq:omega-tensor-power-reduction} directly by
postselection and Carath\'{e}odory, as in Section~\ref{sec:coherent-bound}.
For $\Sigma_n=\sigma^{\ox n}$, Eq.~\eqref{eq:omega-product-inputs}
factors each tensor-power value as $\omega_\alpha(\tau,\sigma,\cM)^n$,
proving Eq.~\eqref{eq:omega-polynomial}.

For two maps, average a maximizing input over $S_n\times S_n$, permuting
each channel's input copies and corresponding output copies separately.
The weight $\sigma_1^{\ox n}\ox\sigma_2^{\ox n}$ is invariant, so
Lemma~\ref{lemma: optimal state of omega} applies.
To make the joint postselection step explicit, let $R_i\cong A'_i$,
$P_i=P_{\Sym^n(R_iA'_i)}$, and
$|\Gamma_i\rangle=\sum_{a=1}^{|A_i|}|a\rangle^{R_i}\ox|a\rangle^{A'_i}$.
With tensor factors regrouped, the normalized canonical purification is
\[
 |\Psi\rangle
 =\bigl(\1_{R_1^nR_2^n}\ox\sqrt{\rho_n}\bigr)
   \bigl(|\Gamma_1\rangle^{\ox n}\ox|\Gamma_2\rangle^{\ox n}\bigr).
\]
For each $i$, simultaneous permutations of the $R_i$ and $A'_i$ copies
fix the maximally entangled vector. They also commute with
$\1_{R_1^nR_2^n}\ox\sqrt{\rho_n}$, since $\rho_n$ is invariant under
each input permutation action separately. Thus $|\Psi\rangle$ belongs to
$\Sym^n(R_1A'_1)\ox\Sym^n(R_2A'_2)$, and
\[
 (P_1\ox P_2)|\Psi\rangle=|\Psi\rangle,
 \qquad |\Psi\rangle\langle\Psi|\leq P_1\ox P_2.
\]
The purification may be entangled between the two symmetric subspaces;
the projector inequality uses only its support and unit norm.
Tracing out $R_1^nR_2^n$ and using
$\Tr_{R_i^n}P_i=g_{n,|A_i|}T_{n,|A_i|}$ gives
\[
 \rho_n\leq g_{n,|A_1|}g_{n,|A_2|}
              T_{n,|A_1|}\ox T_{n,|A_2|}.
\]
Take the product of the two finite ensembles in
Eq.~\eqref{eq:postselection-marginal-ensemble}. It contains at most
$g_{n,|A_1|}g_{n,|A_2|}$ terms, each commuting with
$T_{n,|A_1|}\ox T_{n,|A_2|}$. Repeating the preceding power and norm
estimates yields Eq.~\eqref{eq:omega-polynomial-two}.
\end{proof}

The prefactor $g_{n,d}$ is the same as in Section~\ref{sec:coherent-bound}
and is polynomial in $n$ for fixed $d$.
Thus covariance and postselection reduce the optimization
from all $n$-copy input states to tensor powers of a single input state.
The next argument removes this prefactor from the optimized product value.

\begin{theorem} 
\label{thm:omega-multiplicativity}
For finite-dimensional completely positive maps $\cM_i:A_i\to B_i$,
positive definite weights $\sigma_i$, and $\alpha>1$,
\begin{align}
 W_\alpha(\cM_1\ox\cM_2,\sigma_1\ox\sigma_2)
 =W_\alpha(\cM_1,\sigma_1)W_\alpha(\cM_2,\sigma_2).
 \label{eq:omega-multiplicativity}
\end{align}
\end{theorem}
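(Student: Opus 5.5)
The lower bound $W_\alpha(\cM_1\ox\cM_2,\sigma_1\ox\sigma_2)\geq W_\alpha(\cM_1,\sigma_1)W_\alpha(\cM_2,\sigma_2)$ is immediate. Take maximizers $\rho_1,\rho_2$ for the two single-map problems (they exist by continuity, as noted after Eq.~\eqref{eq:optimized-omega}). Then evaluate $\omega_\alpha$ on the product input $\rho_1\ox\rho_2$ and apply the factorization Eq.~\eqref{eq:omega-product-inputs}. Since the product input is admissible, the maximum over all inputs is at least this value.

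For the upper bound I would use an amplification (tensor-power trick) argument built on Eq.~\eqref{eq:omega-polynomial-two}. First I record that $W_\alpha$ is supermultiplicative under tensor products. Applying the lower bound just proved, with $\cM_1\ox\cM_2$ in place of each factor, and iterating gives
\[
 W_\alpha\bigl((\cM_1\ox\cM_2)^{\ox n},(\sigma_1\ox\sigma_2)^{\ox n}\bigr)\geq W_\alpha(\cM_1\ox\cM_2,\sigma_1\ox\sigma_2)^n .
\]
After regrouping tensor factors, the map $(\cM_1\ox\cM_2)^{\ox n}$ with weight $(\sigma_1\ox\sigma_2)^{\ox n}$ is unitarily equivalent to $\cM_1^{\ox n}\ox\cM_2^{\ox n}$ with weight $\sigma_1^{\ox n}\ox\sigma_2^{\ox n}$. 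The regrouping permutes subsystems on both input and output, and $\omega_\alpha$ is invariant under such a relabeling by unitary invariance of the Schatten norm, as in the proof of Lemma~\ref{lemma: optimal state of omega}. So the two maxima coincide. Combining with Eq.~\eqref{eq:omega-polynomial-two} yields
\[
 W_\alpha(\cM_1\ox\cM_2,\sigma_1\ox\sigma_2)^n\leq g_{n,|A_1|}g_{n,|A_2|}\,W_\alpha(\cM_1,\sigma_1)^nW_\alpha(\cM_2,\sigma_2)^n .
\]

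Taking $n$-th roots gives a factor $(g_{n,|A_1|}g_{n,|A_2|})^{1/n}$. Each $g_{n,d}$ is polynomial in $n$ for fixed $d$, so this factor tends to $1$ as $n\to\infty$, which gives the reverse inequality and hence equality. Positivity of all quantities, which is needed for the roots, holds because $J_\cM\geq0$ and $\sigma_i>0$. If some $W_\alpha$ vanishes, then $\cM_i=0$ and both sides are zero.

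The step I expect to need the most care is the identification in the second paragraph. I have to check that the maximizing input for the $2n$-fold map may be entangled across the $\cM_1$ and $\cM_2$ copies and still be covered by Eq.~\eqref{eq:omega-polynomial-two}. That proposition was stated for exactly this joint $S_n\times S_n$ symmetrization, with a possibly entangled purification in $\Sym^n(R_1A_1')\ox\Sym^n(R_2A_2')$, so the required input is already supplied. What remains is only checking that the regrouping of the Choi operator $J_{(\cM_1\ox\cM_2)^{\ox n}}=J_{\cM_1}^{\ox n}\ox J_{\cM_2}^{\ox n}$ is a consistent permutation of tensor factors on $A'$ and $B$ simultaneously.
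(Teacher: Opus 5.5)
Your proposal is correct and follows essentially the same route as the paper. It obtains supermultiplicativity from product maximizers via Eq.~\eqref{eq:omega-product-inputs}, then uses the chain $W_\alpha(\cM_1\ox\cM_2,\sigma_1\ox\sigma_2)^n\leq W_\alpha((\cM_1\ox\cM_2)^{\ox n},(\sigma_1\ox\sigma_2)^{\ox n})\leq g_{n,|A_1|}g_{n,|A_2|}[W_\alpha(\cM_1,\sigma_1)W_\alpha(\cM_2,\sigma_2)]^n$ from Eq.~\eqref{eq:omega-polynomial-two} after regrouping, and finishes by taking $n$th roots. Your explicit checks — that the regrouping is a simultaneous permutation of input and output factors, and that Eq.~\eqref{eq:omega-polynomial-two} already covers inputs entangled across the two blocks — spell out what the paper compresses into ``after regrouping input and output factors.''
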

\begin{proof}
Product maximizing inputs and Eq.~\eqref{eq:omega-product-inputs} give
supermultiplicativity. They also give
\begin{align}
 &W_\alpha(\cM_1\ox\cM_2,\sigma_1\ox\sigma_2)^n\nonumber\\
 &\quad\leq W_\alpha((\cM_1\ox\cM_2)^{\ox n},
                         (\sigma_1\ox\sigma_2)^{\ox n})\nonumber\\
 &\quad\leq g_{n,|A_1|}g_{n,|A_2|}
             [W_\alpha(\cM_1,\sigma_1)W_\alpha(\cM_2,\sigma_2)]^n,
\end{align}
where the last line is Eq.~\eqref{eq:omega-polynomial-two} after regrouping
input and output factors. Take $n$th roots and let $n\to\infty$.
The $n$th root of the polynomial tends to one, yielding the reverse
inequality. If either map is zero, both sides of
Eq.~\eqref{eq:omega-multiplicativity} vanish directly.
\end{proof}

\subsection{Completely bounded norms}

The first application is the completely bounded norm of a completely
positive map $\cM:A\to B$. Set $\sigma=\1_B$ in the common functional:
\begin{align}
 %a_\alpha(\cM):=
 \|\cM\|_{\CB,1\to\alpha}
 =\max_{\rho\in\cD(A')}\omega_\alpha(\rho,\1_B,\cM)
 =W_\alpha(\cM,\1_B).
 \label{eq:cb-variational}
\end{align}
Thus Theorem~\ref{thm:omega-multiplicativity} proves the first of the
three desired conclusions.

\begin{theorem}[Multiplicativity]\label{thm:cb-multiplicativity}
For finite-dimensional completely positive maps $\cM_1,\cM_2$ and
$\alpha>1$,
\begin{align}
 \norm{\cM_1\ox\cM_2}_{\CB,1\to\alpha}
 =\norm{\cM_1}_{\CB,1\to\alpha}
  \norm{\cM_2}_{\CB,1\to\alpha}.
 \label{eq:cb-multiplicativity}
\end{align}
\end{theorem}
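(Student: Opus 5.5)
The plan is to reduce the statement to Theorem~\ref{thm:omega-multiplicativity} with the weights $\sigma_1=\1_{B_1}$ and $\sigma_2=\1_{B_2}$. By the variational identity in Eq.~\eqref{eq:cb-variational}, $\|\cM_i\|_{\CB,1\to\alpha}=W_\alpha(\cM_i,\1_{B_i})$. We also need the same identity for the product map $\cM_1\ox\cM_2:A_1A_2\to B_1B_2$, namely
\[
 \|\cM_1\ox\cM_2\|_{\CB,1\to\alpha}=W_\alpha(\cM_1\ox\cM_2,\1_{B_1B_2}).
\]
This is again the variational CB-norm formula of \cite{DevetakJungeKingRuskai2006}, now applied to the completely positive map $\cM_1\ox\cM_2$. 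Its Choi operator is $J_{\cM_1}\ox J_{\cM_2}$ after the factors are regrouped. We then observe that $\1_{B_1B_2}=\1_{B_1}\ox\1_{B_2}$, so the weight has exactly the product form required by Theorem~\ref{thm:omega-multiplicativity}.

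The key steps are as follows. First, check that the identity weights are admissible in Theorem~\ref{thm:omega-multiplicativity}: they are positive definite, and normalization is not required (the definition of $\omega_\alpha$ explicitly allows $\sigma=\1_B$). Second, verify that regrouping tensor factors does not change $\omega_\alpha$. The regrouping is a permutation unitary acting on $A_1'A_2'B_1B_2$, and Schatten norms are unitarily invariant, so the value is unaffected. Third, chain the equalities:
\[
 \|\cM_1\ox\cM_2\|_{\CB,1\to\alpha}
 =W_\alpha(\cM_1\ox\cM_2,\1_{B_1}\ox\1_{B_2})
 =W_\alpha(\cM_1,\1_{B_1})W_\alpha(\cM_2,\1_{B_2}),
\]
and then substitute Eq.~\eqref{eq:cb-variational} back in for each factor.

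I do not expect a substantive obstacle. The one point that needs care is that Proposition~\ref{prop:omega-polynomial} and Theorem~\ref{thm:omega-multiplicativity} were stated for completely positive maps that are not necessarily trace preserving. I would confirm that the concavity argument (Epstein) and the permutation covariance used in them need only $J_\cM\geq0$; the paper already remarks on this in the proof of Lemma~\ref{lemma: optimal state of omega}. The degenerate case where either map is zero has already been handled in the proof of Theorem~\ref{thm:omega-multiplicativity}.
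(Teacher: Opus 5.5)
Your proposal is correct and follows the paper's proof: set $\sigma_i=\1_{B_i}$ in Theorem~\ref{thm:omega-multiplicativity} and translate back with the variational identity in Eq.~\eqref{eq:cb-variational}. Your added checks are consistent with what the paper already establishes: that regrouping tensor factors leaves $\omega_\alpha$ unchanged, and that only $J_\cM\geq0$ is needed rather than trace preservation.
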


\begin{proof}
Apply Theorem~\ref{thm:omega-multiplicativity} with
$\sigma_i=\1_{B_i}$ and use Eq.~\eqref{eq:cb-variational}.
Thus the CB norm is multiplicative; its logarithm is additive.
\end{proof}

Theorem~\ref{thm:cb-multiplicativity} recovers the multiplicativity theorem
of Devetak, Junge, King, and Ruskai~\cite{DevetakJungeKingRuskai2006}.
Here exact multiplicativity follows from permutation averaging, polynomial
domination by tensor powers, and an amplification argument. This is the
same reduction principle used in Section~\ref{sec:coherent-bound},
with the commutation property of
Lemma~\ref{lem:postselection-marginal} supplying the power estimate in
Eq.~\eqref{eq:postselection-power-step}.

\subsection{Channel divergence against a replacer}
\label{sec:replacer}

The second application fixes a density operator $\sigma_B$ as
the output weight. The resulting channel divergence is
$\alpha/(\alpha-1)$ times the logarithm of $W_\alpha(\cN,\sigma)$,
so multiplicativity of $W_\alpha$ gives its additivity.

Let $\cR_{\sigma}(X)=\Tr(X)\sigma_B$ and assume initially that
$\sigma_B>0$. Define the completely positive, generally non-trace-preserving map
\[
 \cM_{\alpha,\sigma}(X)
 =\sigma_B^{\frac{1-\alpha}{2\alpha}}\cN(X)
       \sigma_B^{\frac{1-\alpha}{2\alpha}}.
\]
The common Choi representation gives
\begin{align}
 \widetilde D_\alpha(\cN\mids\cR_\sigma)
 &:=\sup_{\psi^{RA}}\widetilde D_\alpha
       ((\id_R\ox\cN)(\psi^{RA})\mids\psi^R\ox\sigma_B)\nonumber\\
 &=\frac{\alpha}{\alpha-1}\log W_\alpha(\cN,\sigma)
 \nonumber\\
 &=\frac{\alpha}{\alpha-1}\log
          \|\cM_{\alpha,\sigma}\|_{\CB,1\to\alpha}.
 \label{eq:replacer-cb}
\end{align}
Here the supremum can be restricted to pure states with $R\cong A$.
This identity is established in Ref.~\cite{CooneyMosonyiWilde2016}.
It identifies the fixed-weight specialization of $\omega_\alpha$
with the tilted CB norm displayed above.
In particular, for channels $\cN_1,\cN_2$ and fixed states $\sigma_1,\sigma_2$,
\begin{align}
 \widetilde D_\alpha(\cN_1\ox\cN_2\mids
                  \cR_{\sigma_1}\ox\cR_{\sigma_2})
 =\widetilde D_\alpha(\cN_1\mids\cR_{\sigma_1})
  +\widetilde D_\alpha(\cN_2\mids\cR_{\sigma_2}).
 \label{eq:replacer-additivity}
\end{align}
Indeed, apply Theorem~\ref{thm:omega-multiplicativity}
to $(\cN_i,\sigma_i)$ and multiply its logarithm by
$\alpha/(\alpha-1)$. This proves the equality directly from the common
functional. Equivalently, the tilted map of the product is
$\cM_{\alpha,\sigma_1}\ox\cM_{\alpha,\sigma_2}$, so CB-norm
multiplicativity gives the same equality.
For singular $\sigma$, the same statement holds on its support if every output
of $\cN$ is supported there; otherwise the divergence is infinite.

\subsection{ R\'enyi channel mutual information}
\label{sec:renyi-mutual-information}
The third application minimizes over the output density operator.
We give an alternative postselection-based proof of channel R\'enyi
mutual-information additivity, preserving the original max--inf
optimization order. The additivity result was established by Gupta and
Wilde~\cite[Lemma~6]{GuptaWilde2015}.
For a channel $\cN$ and $1<\alpha<\infty$, define
\begin{align}
 b_\alpha(\cN)
 &:=\max_{\rho\in\cD(A')}\inf_{\substack{\sigma>0\\\Tr\sigma=1}}
                 \omega_\alpha(\rho,\sigma,\cN)
  =2^{\frac{\alpha-1}{\alpha}\widetilde I_\alpha(\cN)}.
 \label{eq:omega-mi-value}
\end{align}
The equality follows directly from Eq.~\eqref{eq:omega-mi-identification}.
All comparison states below are positive definite and have unit trace;
their infima include singular limits. Indeed, for
$\sigma_\delta=(1-\delta)\sigma+\delta\1_B/|B|$, $0<\delta<1$, and
$t=(\alpha-1)/\alpha$, operator antitonicity of $x^{-t}$ gives
\begin{align}
 \omega_\alpha(\rho,\sigma_\delta,\cN)
 \leq(1-\delta)^{-t}\omega_\alpha(\rho,\sigma,\cN).
 \label{eq:omega-regularization}
\end{align}
For singular $\sigma$, the right side is interpreted using the support
convention when finite, and is otherwise infinite. The factor tends to
one uniformly in $\rho$, justifying the use of full-rank approximations.

We first establish the finite-$n$ bound
\begin{align}
 b_\alpha(\cN^{\ox n})
 \leq g_{n,|A|}^{\,2-1/\alpha}b_\alpha(\cN)^n.
 \label{eq:mi-polynomial}
\end{align}
Put $r=1/\alpha$ and $g=g_{n,|A|}$. To justify the symmetry reduction
without interchanging optimizations, consider
\[
 f_n(\rho_n):=
 \inf_{\substack{\Sigma_n>0\\\Tr\Sigma_n=1}}
       \omega_\alpha(\rho_n,\Sigma_n,\cN^{\ox n}).
\]
For each fixed $\Sigma_n$, the functional is concave in $\rho_n$ by
the concavity argument in the proof of
Lemma~\ref{lemma: optimal state of omega}; its infimum $f_n$ is therefore
also concave. Channel covariance transforms the input and comparison
state by the corresponding permutations. Since the comparison-state
domain is unchanged by this transformation, $f_n$ is permutation
invariant. Averaging $\rho_n$ over permutations cannot decrease $f_n$,
so it suffices to bound permutation-invariant inputs.

For such an input, use the universal ensemble
$T=\sum_{i=1}^{L}q_i\tau_i^{\ox n}$, $L\leq g$, from
Lemma~\ref{lem:postselection-marginal}. For every $\varepsilon>0$,
choose positive definite density operators $\sigma_i$ such that
\[
 \omega_\alpha(\tau_i,\sigma_i,\cN)
 \leq \inf_\sigma\omega_\alpha(\tau_i,\sigma,\cN)+\varepsilon
 \leq b_\alpha(\cN)+\varepsilon.
\]
Each $\tau_i$ may have its own comparison state. Combine these choices
into the admissible permutation-invariant density operator
\begin{align}
 \Sigma_n:=\frac1L\sum_{i=1}^{L}\sigma_i^{\ox n}.
 \label{eq:mi-comparison-mixture}
\end{align}
For every $i$, $\Sigma_n\geq L^{-1}\sigma_i^{\ox n}$, and hence
\[
 \Sigma_n^{r-1}
 \leq L^{1-r}(\sigma_i^{r-1})^{\ox n}
\]
by operator antitonicity. Combining this estimate with
Eq.~\eqref{eq:postselection-power-step}, taking the Choi congruence,
and applying the Schatten triangle inequality and product-input identity
gives
\begin{align*}
 \omega_\alpha(\rho_n,\Sigma_n,\cN^{\ox n})
 &\leq g^rL^{1-r}\sum_{i=1}^{L}q_i^r
              \omega_\alpha(\tau_i,\sigma_i,\cN)^n\\
 &\leq g^rL^{2-2r}[b_\alpha(\cN)+\varepsilon]^n\\
 &\leq g^{2-r}[b_\alpha(\cN)+\varepsilon]^n,
\end{align*}
where $\sum_iq_i^r\leq L^{1-r}$. Taking the comparison-state infimum,
then the input maximum, and letting $\varepsilon\downarrow0$ proves
Eq.~\eqref{eq:mi-polynomial} in the original optimization order.

For two channels $\cN_j:A_j\to B_j$, average each input block
independently, as in Proposition~\ref{prop:omega-polynomial}.
The same concavity and covariance argument applies to the optimized
function $f_n$ for these two groups. Set $g_j=g_{n,|A_j|}$ and $G=g_1g_2$.
The joint input is dominated by $G T_1\ox T_2$, whose product ensemble
has $L_1L_2\leq G$ terms. Choose comparison states $\sigma_{j,a}$
separately for each single-copy ensemble state of channel $j$, with
functional value at most $b_\alpha(\cN_j)+\varepsilon$, and use
\[
 \Sigma_n=
 \left(\frac1{L_1}\sum_{a=1}^{L_1}\sigma_{1,a}^{\ox n}\right)
 \ox
 \left(\frac1{L_2}\sum_{b=1}^{L_2}\sigma_{2,b}^{\ox n}\right).
\]
Each product comparison term is dominated by $L_1L_2\Sigma_n$.
The preceding estimate therefore has prefactor
$G^r(L_1L_2)^{2-2r}\leq G^{2-r}$. Letting
$\varepsilon\downarrow0$ yields
\begin{align}
 b_\alpha((\cN_1\ox\cN_2)^{\ox n})
 \leq (g_1g_2)^{\,2-1/\alpha}
          [b_\alpha(\cN_1)b_\alpha(\cN_2)]^n,
 \label{eq:mi-polynomial-two}
\end{align}
after regrouping tensor factors. The following argument removes this
polynomial factor.

\begin{corollary}[Additivity of channel R\'enyi mutual information]
\label{cor:mutual-information-additivity}
For every $\alpha>1$,
\[
 \widetilde I_\alpha(\cN_1\ox\cN_2)
 =\widetilde I_\alpha(\cN_1)+\widetilde I_\alpha(\cN_2).
\]
\end{corollary}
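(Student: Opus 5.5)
The proof works with $b_\alpha$ from Eq.~\eqref{eq:omega-mi-value}. Since $\widetilde I_\alpha(\cN)=\frac{\alpha}{\alpha-1}\log b_\alpha(\cN)$, the claim is equivalent to multiplicativity,
\[
 b_\alpha(\cN_1\ox\cN_2)=b_\alpha(\cN_1)\,b_\alpha(\cN_2).
\]
We prove the two inequalities separately. Following the pattern of Theorem~\ref{thm:omega-multiplicativity}, the upper bound comes from amplification of the polynomial estimate Eq.~\eqref{eq:mi-polynomial-two}. The lower bound comes from product inputs together with additivity of the state R\'enyi mutual information~\cite{Beigi2013}.

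\emph{Lower bound.} Let $\rho_1,\rho_2$ be maximizers in the definition of $b_\alpha(\cN_1)$ and $b_\alpha(\cN_2)$, and take the product input $\rho_1\ox\rho_2$. For each fixed input, Eq.~\eqref{eq:common-divergence} identifies $\frac{\alpha}{\alpha-1}\log\inf_\sigma\omega_\alpha(\rho,\sigma,\cN)$ with the state sandwiched R\'enyi mutual information $\widetilde I_\alpha(R;B)$ of the output $(\id_R\ox\cN)(\psi)$, where $\psi$ purifies $\rho^T$. Here the infimum over full-rank $\sigma$ agrees with the infimum over all states by the regularization in Eq.~\eqref{eq:omega-regularization}. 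For a product state $\omega_1\ox\omega_2$ of this type, additivity of the state R\'enyi mutual information gives
\[
 \inf_{\Sigma}\omega_\alpha(\rho_1\ox\rho_2,\Sigma,\cN_1\ox\cN_2)
 =\prod_j\inf_{\sigma_j}\omega_\alpha(\rho_j,\sigma_j,\cN_j).
\]
The "$\leq$" direction is immediate from product comparison states and Eq.~\eqref{eq:omega-product-inputs}. The "$\geq$" direction, which handles correlated comparison states, is exactly the content of~\cite{Beigi2013}. Maximizing the left side over all joint inputs then gives $b_\alpha(\cN_1\ox\cN_2)\geq b_\alpha(\cN_1)b_\alpha(\cN_2)$. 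Applied with $n$ copies, the same argument yields
\[
 b_\alpha((\cN_1\ox\cN_2)^{\ox n})\geq b_\alpha(\cN_1\ox\cN_2)^n.
\]

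\emph{Upper bound.} Combining this last inequality with Eq.~\eqref{eq:mi-polynomial-two} gives
\[
 b_\alpha(\cN_1\ox\cN_2)^n\leq (g_{n,|A_1|}g_{n,|A_2|})^{2-1/\alpha}\,[b_\alpha(\cN_1)b_\alpha(\cN_2)]^n .
\]
The factors $g_{n,d}$ are polynomial in $n$ for fixed dimension, so their $n$th roots tend to one. Taking $n$th roots and letting $n\to\infty$ gives $b_\alpha(\cN_1\ox\cN_2)\leq b_\alpha(\cN_1)b_\alpha(\cN_2)$. All $b_\alpha$ values are strictly positive for channels, since $\omega_\alpha>0$, so the logarithms are finite. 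Multiplying the logarithm of the resulting equality by $\alpha/(\alpha-1)$ gives the corollary.

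\emph{Main obstacle.} The delicate step is the supermultiplicativity on $n$ copies used in the upper bound. Because of the inner infimum, product inputs alone do not bound a joint infimum over possibly correlated comparison states. This is why we rely on the state-level additivity from~\cite{Beigi2013} rather than on Eq.~\eqref{eq:omega-product-inputs}. It must be applied to the regrouped system $R_1R_2$ versus $B_1B_2$, checking that the identification Eq.~\eqref{eq:common-divergence} is compatible with the tensor regrouping of Choi operators.
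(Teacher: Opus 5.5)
Your proposal is correct and follows the paper's proof essentially step for step: product inputs plus Beigi's state-level additivity give supermultiplicativity, including $b_\alpha(\cP)^n\leq b_\alpha(\cP^{\ox n})$ for $\cP=\cN_1\ox\cN_2$, and the polynomial estimate Eq.~\eqref{eq:mi-polynomial-two} with $n$th roots gives the reverse inequality. One minor point: positivity of $b_\alpha$ follows from $\widetilde I_\alpha\geq 0$, so $b_\alpha\geq 1$, rather than from $\omega_\alpha>0$, since an infimum of positive values could vanish.
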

\begin{proof}
For a bipartite state $\eta_{RB}$ write
$\widetilde I_\alpha(R;B)_\eta
 :=\inf_{\sigma_B}\widetilde D_\alpha(\eta_{RB}\mids\eta_R\ox\sigma_B)$.
State sandwiched R\'enyi mutual information satisfies
\begin{align}
 \widetilde I_\alpha(R_1R_2;B_1B_2)_{\eta_1\ox\eta_2}
 =\widetilde I_\alpha(R_1;B_1)_{\eta_1}
  +\widetilde I_\alpha(R_2;B_2)_{\eta_2}
\end{align}
by Ref.~\cite[Theorem~11]{Beigi2013}. Its minimization includes all states
on $B_1B_2$, including correlated ones. Choosing product pure channel
inputs therefore proves
$b_\alpha(\cN_1\ox\cN_2)\geq b_\alpha(\cN_1)b_\alpha(\cN_2)$.
It also gives $b_\alpha(\cP)^n\leq b_\alpha(\cP^{\ox n})$ for
$\cP=\cN_1\ox\cN_2$, using only the state result.
On the other hand, Eq.~\eqref{eq:mi-polynomial-two} gives
\begin{align*}
 b_\alpha(\cP)^n
 &\leq b_\alpha(\cP^{\ox n})\\
 &\leq \bigl(g_{n,|A_1|}g_{n,|A_2|}\bigr)^{2-1/\alpha}
          [b_\alpha(\cN_1)b_\alpha(\cN_2)]^n.
\end{align*}
Taking $n$th roots and letting $n\to\infty$ proves the reverse inequality,
since the $n$th root of the polynomial factor tends to one.
Finally, Eq.~\eqref{eq:omega-mi-value} converts multiplicativity of
$b_\alpha$ into the asserted additivity of $\widetilde I_\alpha$.
\end{proof}

\subsection{Minimum output entropy and pure input states}
\label{subsec:symmetric-minimum-output}

The additional objective here is to relate minimum output entropy to
$\omega_\alpha$ and use the postselection method of
Section~\ref{sec:coherent-bound} to bound the minimum output entropy
when the $n$-copy input is permutation invariant.
Let $\cN:A\to B$ be a channel and $\alpha>1$. Define
\begin{align}
 \nu_\alpha(\cN)
 &:=\max_{\substack{\rho\in\cD(A')\\\operatorname{rank}\rho=1}}
            \omega_\alpha(\rho,\1_B,\cN)
 =\max_{\tau\in\cD(A)}\|\cN(\tau)\|_\alpha.
 \label{eq:omega-max-output-norm}
\end{align}
To prove the second equality, observe that a pure $\rho$ satisfies
$\rho^{1/(2\alpha)}=\rho$ and
\begin{align}
 (\rho\ox\1_B)J_\cN(\rho\ox\1_B)
 &=\rho\ox\cN(\rho^T),&
 \omega_\alpha(\rho,\1_B,\cN)&=\|\cN(\rho^T)\|_\alpha.
 \label{eq:omega-pure-output}
\end{align}
Every pure input on $A$ occurs as $\rho^T$. Convexity of the Schatten
norm implies that its maximum over all input states is attained at a
pure state, proving Eq.~\eqref{eq:omega-max-output-norm}.
Thus, writing
\begin{align}
 %S_\alpha(\eta)&:=\frac{1}{1-\alpha}\log\Tr\eta^\alpha,&
 h_\alpha(\cN):=\min_{\tau\in\cD(A)}S_\alpha(\cN(\tau)),
\end{align}
we obtain the precise connection with the common functional:
\begin{align}
 h_\alpha(\cN)
 =\frac{\alpha}{1-\alpha}\log\nu_\alpha(\cN)
 =\frac{\alpha}{1-\alpha}\log
    \max_{\substack{\rho\in\cD(A')\\\operatorname{rank}\rho=1}}
          \omega_\alpha(\rho,\1_B,\cN).
 \label{eq:minimum-output-omega}
\end{align}

The pure-state constraint distinguishes $\nu_\alpha$ from
$W_\alpha(\cN,\1_B)$, whose maximization includes mixed states.
For example, for the identity channel on dimension $d>1$,
\[
 \nu_\alpha(\id_d)=1,\qquad
 W_\alpha(\id_d,\1)=\max_\rho\Tr\rho^{1/\alpha}
                    =d^{1-1/\alpha}.
\]
Averaging in Lemma~\ref{lemma: optimal state of omega} need not preserve
purity. We therefore impose permutation invariance as an input
restriction for the entropy problem and apply postselection directly.
Define
\begin{align}
 h_{\alpha,n}^{\mathrm{sym}}(\cN)
 :=\min_{\substack{\tau_n\in\cD(A^n)\\
                 P_A(\pi)\tau_nP_A(\pi)^\dagger=\tau_n\ \forall\pi\in S_n}}
           S_\alpha(\cN^{\ox n}(\tau_n)).
 \label{eq:symmetric-minimum-output-definition}
\end{align}
This minimization allows mixed permutation-invariant inputs.
At $\alpha=1$, use the von Neumann entropy in these definitions.

\begin{proposition}
\label{prop:symmetric-minimum-output}
For $d=|A|$, $n\geq1$, and $\alpha>1$,
\begin{align}
 nh_\alpha(\cN)-\frac{\alpha}{\alpha-1}\log g_{n,d}
 \leq h_{\alpha,n}^{\mathrm{sym}}(\cN)
 \leq nh_\alpha(\cN),\qquad
 g_{n,d}=\binom{n+d^2-1}{d^2-1}.
 \label{eq:symmetric-minimum-output-finite}
\end{align}
Consequently, for every finite $\alpha\geq1$,
\begin{align}
 \lim_{n\to\infty}\frac1n h_{\alpha,n}^{\mathrm{sym}}(\cN)
 =h_\alpha(\cN).
 \label{eq:symmetric-minimum-output-limit}
\end{align}
\end{proposition}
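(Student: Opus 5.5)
The upper bound is immediate. Let $\tau^\star$ attain $h_\alpha(\cN)$. The product input $(\tau^\star)^{\ox n}$ is permutation invariant. Since $\cN^{\ox n}((\tau^\star)^{\ox n})=\cN(\tau^\star)^{\ox n}$ and $S_\alpha$ is additive on tensor products, we get $h_{\alpha,n}^{\mathrm{sym}}(\cN)\leq nh_\alpha(\cN)$. The same argument works at $\alpha=1$.

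For the lower bound with $\alpha>1$, we must show that every permutation-invariant $\tau_n$ satisfies
\[
 \|\cN^{\ox n}(\tau_n)\|_\alpha\leq g_{n,d}^{(\alpha-1)/\alpha}\,\nu_\alpha(\cN)^n .
\]
Taking $\frac{\alpha}{1-\alpha}\log$ of both sides (the coefficient is negative, so the inequality reverses) and using Eq.~\eqref{eq:minimum-output-omega} then gives
\[
 S_\alpha(\cN^{\ox n}(\tau_n))\geq nh_\alpha(\cN)-\log g_{n,d}.
\]
This is stronger than the stated bound, because $\frac{\alpha}{\alpha-1}\geq1$. If the direct estimate loses a factor, the stated weaker constant still leaves room.

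The domination step goes through postselection. As in Section~\ref{sec:coherent-bound}, a permutation-invariant $\tau_n$ has a purification in $\Sym^n(RA)$. Tracing out $R^n$ in $\proj{\rho}\leq P_{\Sym^n(RA)}=g_{n,d}\int(\proj{\varphi})^{\ox n}d\varphi$ gives $\tau_n\leq g_{n,d}T$, where $T=\sum_i q_i\tau_i^{\ox n}$ is the universal marginal of Lemma~\ref{lem:postselection-marginal}. The channel is positive, so
\[
 \cN^{\ox n}(\tau_n)\leq g_{n,d}\sum_iq_i\cN(\tau_i)^{\ox n}.
\]
The Schatten $\alpha$-norm is monotone on positive operators and satisfies the triangle inequality. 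Hence
\[
 \|\cN^{\ox n}(\tau_n)\|_\alpha\leq g_{n,d}\sum_iq_i\|\cN(\tau_i)\|_\alpha^n\leq g_{n,d}\,\nu_\alpha(\cN)^n ,
\]
where the last step uses Eq.~\eqref{eq:omega-max-output-norm}: the maximum output norm over all states, mixed included, equals $\nu_\alpha$. This gives
\[
 S_\alpha\geq nh_\alpha(\cN)-\frac{\alpha}{\alpha-1}\log g_{n,d},
\]
which is exactly the stated bound. No power estimate or commutation property is needed here, since the inputs $\tau_i$ may be mixed without harm. Mixed $\tau_n$ are covered by the same argument, so there is no purity issue.

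For the limit with $\alpha>1$, divide Eq.~\eqref{eq:symmetric-minimum-output-finite} by $n$ and use $\log g_{n,d}=O(\log n)$. The main obstacle is $\alpha=1$, where the correction constant $\frac{\alpha}{\alpha-1}$ blows up.

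For $\alpha=1$ we interpolate. The upper bound holds as before. For the lower bound, monotonicity in the order gives $S(\eta)\geq S_\alpha(\eta)$ for $\alpha>1$. Hence, for every fixed $\alpha>1$,
\[
 \frac1nh_{1,n}^{\mathrm{sym}}\geq\frac1nh_{\alpha,n}^{\mathrm{sym}}\geq h_\alpha(\cN)-\frac{\alpha}{(\alpha-1)n}\log g_{n,d}.
\]
Taking $\liminf_n$ yields $\liminf_n\frac1nh_{1,n}^{\mathrm{sym}}\geq h_\alpha(\cN)$ for every $\alpha>1$.

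It remains to show $h_\alpha(\cN)\to h_1(\cN)$ as $\alpha\downarrow1$. The map $(\alpha,\tau)\mapsto S_\alpha(\cN(\tau))$ is continuous on $[1,2]\times\cD(A)$, and this domain is compact. A minimum of a jointly continuous function over a compact set is continuous in the parameter, which gives the convergence. (Alternatively, one could take $\alpha=1+n^{-1/2}$ together with a uniform continuity bound.) Combining the $\liminf$ bound with the upper bound gives Eq.~\eqref{eq:symmetric-minimum-output-limit}.
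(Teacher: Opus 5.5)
Your proposal is correct and follows the paper's proof essentially step for step: postselection domination $\tau_n\leq g_{n,d}\sum_iq_i\tau_i^{\ox n}$, positivity of $\cN^{\ox n}$, monotonicity and the triangle inequality for $\|\cdot\|_\alpha$ with $\nu_\alpha(\cN)=\max_\tau\|\cN(\tau)\|_\alpha$, the tensor-power upper bound, and, at $\alpha=1$, the $S\geq S_\beta$ interpolation combined with joint continuity on $[1,2]\times\cD(A)$. Drop your opening claim that you ``must show'' the sharper estimate with $g_{n,d}^{(\alpha-1)/\alpha}$: you never prove it, and a domination $\eta\leq c\,\bar\eta$ by itself only yields $\|\eta\|_\alpha\leq c\|\bar\eta\|_\alpha$, which is tight in general, so the factor $g_{n,d}$ you actually obtain is exactly what the stated bound needs.
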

\begin{proof}
For a permutation-invariant $\tau_n$, symmetric purification and
postselection in Section~\ref{sec:coherent-bound}, followed by tracing
out the reference, give
\[
 \tau_n\leq g_{n,d}\sum_jq_j\tau_j^{\ox n},
 \qquad \sum_jq_j=1.
\]
This is also the domination in
Lemma~\ref{lem:postselection-marginal}, with the input copy relabeled.
Complete positivity, monotonicity of the Schatten norm on positive
operators, and its triangle inequality imply
\begin{align}
 \|\cN^{\ox n}(\tau_n)\|_\alpha
 &\leq g_{n,d}\sum_jq_j\|\cN(\tau_j)^{\ox n}\|_\alpha\nonumber\\
 &\leq g_{n,d}\nu_\alpha(\cN)^n\nonumber\\
 &=g_{n,d}\left[
       \max_{\substack{\rho\in\cD(A')\\\operatorname{rank}\rho=1}}
           \omega_\alpha(\rho,\1_B,\cN)\right]^n.
 \label{eq:symmetric-output-norm-omega}
\end{align}
Take logarithms and multiply by $\alpha/(1-\alpha)<0$.
Equation~\eqref{eq:minimum-output-omega} then proves the lower entropy
bound in Eq.~\eqref{eq:symmetric-minimum-output-finite}.
A tensor power of a single-use minimizing pure input is permutation
invariant and attains output entropy $nh_\alpha(\cN)$, proving the
upper bound. Since $\log g_{n,d}=O(\log n)$, division by $n$ proves
Eq.~\eqref{eq:symmetric-minimum-output-limit} for fixed $\alpha>1$.

For $\alpha=1$, $S(\eta)\geq S_\beta(\eta)$ for every $\beta>1$ gives
\[
 \liminf_{n\to\infty}\frac1n h_{1,n}^{\mathrm{sym}}(\cN)
 \geq h_\beta(\cN).
\]
The convergence $S_\beta\to S$ as $\beta\downarrow1$ is uniform on the
fixed-dimensional output state space, by joint continuity on its compact
product with $\beta\in[1,2]$. Hence $h_\beta(\cN)\to h_1(\cN)$.
Tensor powers of a single-use minimizing input give the reverse
asymptotic inequality.
\end{proof}

The result bounds the entropy reduction due to correlations in a
permutation-invariant input by $\alpha\log g_{n,d}/(\alpha-1)$ for
$\alpha>1$, and shows that this reduction vanishes per channel use.
Unrestricted minimum output entropy is nonadditive in general, for
R\'enyi orders $\alpha>1$~\cite{HaydenWinter2008} and for the von Neumann
entropy~\cite{Hastings2009}. The symmetry restriction in
Eq.~\eqref{eq:symmetric-minimum-output-definition} is therefore part
of this additional objective. Its connection to $\omega_\alpha$ is the
pure-input formula in Eq.~\eqref{eq:minimum-output-omega}; its bound uses
the same postselection step as the preceding three applications.

\section{Discussion}
\label{sec:discussion}

We have shown that full permutation invariance constrains the
coherent-information advantage obtainable from correlations across
channel uses. The optimized coherent information per
use converges to $Q^{(1)}(\cN)$, with a correction of order
$O(\log n/\sqrt n)$. Thus finite-block improvements remain possible,
while the asymptotic coherent-information optimization reduces to a
single channel use. For unrestricted inputs, superadditivity and Fekete's
lemma identify the supremum of the normalized coherent information over
block lengths with its limit as $n\to\infty$, yielding the regularized
quantum capacity~\cite{Lloyd_capacity_97,Shor_direct_capacity2002,Devetak-capacity-2005}.
For permutation-invariant inputs, the supremum
$\sup_{n\geq1}q_n^{\mathrm{sym}}(\cN)$ is attained at some finite block
length, possibly $n=1$. Indeed, any strict improvement over
$Q^{(1)}(\cN)$ exceeds all sufficiently late values of this convergent
sequence; if no such improvement occurs, a single use attains the
supremum. For the pure-source entanglement-generation codes
with permutation-invariant channel inputs defined in
Section~\ref{sec:symmetric-coding}, the same estimates give a strong
converse at rates above $Q^{(1)}(\cN)$.

The functional $\omega_\alpha$ places the channel quantities studied here
in a common variational framework for $\alpha>1$. Maximizing over the input with the identity as output
weight gives the completely bounded $1\to\alpha$ norm. A fixed density
operator as output weight yields, after input maximization and a
logarithmic rescaling, the channel divergence against the corresponding
replacer. Minimizing over output density operators before maximizing over
the input gives the channel R\'enyi mutual information with the same
rescaling.

For these three quantities, concavity and the permutation covariance of
every tensor-power channel allow invariant optimizers to be selected
without restricting the original channel optimization. Postselection then bounds
the optimized value by a polynomial factor times values on tensor-power
inputs. The universal postselection marginal commutes with each of these
tensor-power inputs, supplying the operator estimate needed for
$\omega_\alpha$. Applying the bound to repeated copies and taking roots
removes the polynomial cost and yields exact multiplicativity or additivity.

The additivity statements themselves are established results. Completely
bounded norm multiplicativity was proved by Devetak, Junge, King, and
Ruskai \cite{DevetakJungeKingRuskai2006}, with another proof through
conjugate channels due to Jen\v{c}ov\'a \cite{Jencova2006}; the channel
R\'enyi results appear in Refs.~\cite{GuptaWilde2015,CooneyMosonyiWilde2016}.
Our contribution is an alternative common proof strategy based on the
postselection estimates developed for symmetric coherent information.
For channel R\'enyi mutual information, we preserve the original max-inf
optimization order by constructing mixtures of tensor-power comparison
states; additivity of state R\'enyi mutual information supplies the
reverse inequality \cite{Beigi2013}. The common framework makes explicit
how the input optimization, the output weight, and tensor-product symmetry
combine to produce the respective additivity results.

A further application relates the pure-input restriction of
$\omega_\alpha$ to minimum output R\'enyi entropy. Postselection bounds
how much correlations in a permutation-invariant input can lower the
output entropy below $n$ times the single-use minimum, with a logarithmic
deficit for each fixed $\alpha>1$. The optimized entropy per use therefore
converges to the single-use minimum, and we also obtain convergence for
the von Neumann entropy. This symmetry-restricted conclusion complements
the failure of unrestricted minimum output entropy additivity
\cite{HaydenWinter2008,Hastings2009}. Together, the applications identify
concrete conditions under which permutation symmetry and polynomial
postselection bounds suffice to control correlations and recover
single-use asymptotic quantities.

\bigskip

\emph{Acknowledgments.}
Z.B.K. acknowledges support from the Ada Lovelace Postdoctoral Fellowship at Perimeter Institute for Theoretical Physics.  
GS acknowledges NSERC-NSF alliance grant ALLRP-586858-2023 and NSERC Discovery grant RGPIN-2025-02094.
DL is supported under an NSERC Discovery grant RGPIN-2024-03823 and an 
NSERC Alliance Consortia Quantum grants (ALLRP
578455-22).

Research at Perimeter Institute is supported in part by the Government of Canada through the Department of Innovation, Science and Economic Development and by the Province of Ontario through the Ministry of Colleges and Universities.

\emph{AI disclosure.} The ideas in this paper were obtained by the authors before the widespread use of AI in the research community. We have used AI to assist with the writing and preparation of the manuscript.

\appendix
\section{Supporting inequalities}
\label{app:supporting-inequalities}

The following symmetric-purification lemma is stated in~\cite[Lemma~II.5]{half_de_Finetti_2007}.

\begin{lemma}\label{lemma: symmetric purification}
\label{lemma: purification on symm}
Every permutation-invariant state on $A^{\ox n}$ admits a purification
in $\Sym^n(RA)$ with $R\cong A$ \cite{half_de_Finetti_2007}.
\end{lemma}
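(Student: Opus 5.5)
The plan is to build the purification explicitly from the canonical vector and then check that it is invariant under simultaneous permutations. Let $\tau_n$ be the permutation-invariant state on $A^{\ox n}$. Take $R\cong A$ with the same fixed basis, and set $|\Gamma\rangle^{RA}=\sum_{i=1}^d|i\rangle^R|i\rangle^A$. Define
\[
 |\Psi\rangle^{R^nA^n}:=\bigl(\1_{R^n}\ox\sqrt{\tau_n}\bigr)|\Gamma\rangle^{\ox n},
\]
with the tensor factors regrouped as $(RA)^{\ox n}$.

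\textbf{Purification.} The reduced state of $|\Gamma\rangle^{\ox n}$ on $A^n$ is $\1_{A^n}$. Tracing out $R^n$ therefore gives $\sqrt{\tau_n}\,\1\,\sqrt{\tau_n}=\tau_n$. In particular $\langle\Psi|\Psi\rangle=\Tr\tau_n=1$, so $|\Psi\rangle$ is a normalized purification.

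\textbf{Symmetry.} Fix $\pi\in S_n$ and write $P(\pi)$ for the permutation operator. The simultaneous permutation on $(RA)^{\ox n}$ factorizes as $P_R(\pi)\ox P_A(\pi)$. Two facts are needed.
\begin{enumerate}
\item $|\Gamma\rangle^{\ox n}$ is fixed by $P_R(\pi)\ox P_A(\pi)$, because permuting identical tensor factors leaves a tensor power unchanged.
\item $P_A(\pi)$ commutes with $\sqrt{\tau_n}$. Invariance $P_A(\pi)\tau_nP_A(\pi)^\dagger=\tau_n$ means $P_A(\pi)$ commutes with $\tau_n$, and hence with every function of $\tau_n$ defined by functional calculus, including the square root.
\end{enumerate}
Combining these,
\[
 (P_R(\pi)\ox P_A(\pi))|\Psi\rangle=(\1\ox\sqrt{\tau_n})(P_R(\pi)\ox P_A(\pi))|\Gamma\rangle^{\ox n}=|\Psi\rangle .
\]
A vector fixed by every permutation of the $n$ factors $RA$ lies in $\Sym^n(RA)$, which is exactly the claim.

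\textbf{Main obstacle.} The only step needing care is the commutation of $\sqrt{\tau_n}$ with $P_A(\pi)$. Invariance of $\tau_n$ under conjugation by the unitary $P_A(\pi)$ gives $[\tau_n,P_A(\pi)]=0$, and commutation then passes to $\sqrt{\tau_n}$, a polynomial in $\tau_n$. No transpose issue arises, because the permutation matrices are real in the product basis (as already noted in the proof of Corollary~\ref{lemma: permutation invariant optimizer}). This is the same construction used for $|\Psi\rangle$ in the proof of Proposition~\ref{prop:omega-polynomial}.
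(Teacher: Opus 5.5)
Your proposal is correct and matches the paper's proof. Both use the canonical purification $(\1_{R^n}\ox\sqrt{\tau_n})|\Gamma\rangle^{\ox n}$, the invariance of $|\Gamma\rangle^{\ox n}$ under simultaneous permutations, and the commutation of $\sqrt{\tau_n}$ with $P_A(\pi)$; you additionally spell out the partial-trace check that the vector is a purification, which the paper leaves implicit.
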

\begin{proof}
Take the canonical purification
$(\1_{R^n}\ox\sqrt{\rho^{A^n}})|\Gamma\rangle^{\ox n}$.
Its norm is one. Permutation matrices are real in the product basis, so
$|\Gamma\rangle^{\ox n}$ is fixed by simultaneous permutations of $R^n$
and $A^n$. Since $\sqrt\rho$ commutes with these permutations, so is the
displayed purification.
\end{proof}

\begin{lemma}\label{lemma: post-selection}
For a Hilbert space $\mathcal H$ of dimension $D$ and Haar probability
measure $d\varphi$ on its pure states,
 the standard Haar-moment identity
\cite{half_de_Finetti_2007} gives
\[
 \int\varphi^{\ox n}\,d\varphi
 =\frac{P_{\Sym^n(\mathcal H)}}{\binom{n+D-1}{D-1}}.
\]
Hence every state $\rho$ supported on $\Sym^n(\mathcal H)$ satisfies
$\rho\leq\binom{n+D-1}{D-1}\int\varphi^{\ox n}d\varphi$.
\end{lemma}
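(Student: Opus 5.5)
The proof splits into two parts: the Haar-moment identity, then the operator inequality. For the identity, I would set $M:=\int\varphi^{\ox n}\,d\varphi$. Every $\varphi^{\ox n}=\proj{\phi}^{\ox n}$ has range in $\Sym^n(\mathcal H)$, so $M$ is supported on, and maps into, the symmetric subspace. Unitary invariance of the Haar measure gives $U^{\ox n}MU^{\dagger\ox n}=M$ for all unitaries $U$ on $\mathcal H$. The restriction $U\mapsto U^{\ox n}|_{\Sym^n(\mathcal H)}$ is an irreducible representation of $U(D)$ (the highest-weight module of weight $(n,0,\dots,0)$). Schur's lemma then forces $M=c\,P_{\Sym^n(\mathcal H)}$.

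The constant comes from taking traces. Each $\varphi^{\ox n}$ has trace one, so $\Tr M=1$, while $\Tr P_{\Sym^n}=\dim\Sym^n(\mathcal H)=\binom{n+D-1}{D-1}$; hence $c=1/\binom{n+D-1}{D-1}$.

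If I wanted to avoid invoking irreducibility, I would check it directly. Any operator on $\Sym^n$ commuting with all $U^{\ox n}$ also commutes with the algebra they span. That algebra contains every $X^{\ox n}$ restricted to $\Sym^n$, and by polarization these span all operators on $\Sym^n(\mathcal H)$ (the symmetric tensors $X^{\ox n}$ span the symmetric part of $\mathrm{End}(\mathcal H)^{\ox n}$). So the commutant is trivial. This is the only step with any real content, and it is standard.

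The operator inequality is then immediate. If $\rho\geq0$ has support in $\Sym^n(\mathcal H)$ and $\Tr\rho=1$, then $\rho\leq\|\rho\|_\infty P_{\Sym^n}\leq P_{\Sym^n}$. Combined with the identity,
\[
 P_{\Sym^n}=\binom{n+D-1}{D-1}\int\varphi^{\ox n}\,d\varphi ,
\]
which gives the claim.
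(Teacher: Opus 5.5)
Your proof is correct and follows the same route as the paper. The integral is supported on $\Sym^n(\mathcal H)$ and commutes with every $U^{\ox n}$; Schur's lemma together with $\dim\Sym^n(\mathcal H)=\binom{n+D-1}{D-1}$ fixes it as the normalized symmetric projector; and a density operator supported there is bounded by $P_{\Sym^n(\mathcal H)}$. Your added irreducibility check makes explicit a step the paper takes as standard: the span of the $U^{\ox n}$ contains all $X^{\ox n}$, and the choices $(|a\rangle\langle b|)^{\ox n}=|a^{\ox n}\rangle\langle b^{\ox n}|$ span $\mathrm{End}(\Sym^n(\mathcal H))$.
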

\begin{proof}
The integral is supported on the symmetric subspace, commutes with
$U^{\ox n}$ for every unitary $U$, and has trace one. Schur's lemma
identifies it with the normalized symmetric projector. A density operator
supported there is at most that projector \cite{post-selection2009}.
\end{proof}

\begin{lemma} 
\label{lem:postselection-marginal}
Let $n\geq1$ and $R\cong A'\cong\mathbb C^d$, and define
\begin{align}
 g_{n,d}&=\binom{n+d^2-1}{d^2-1},&
 T_{n,d}&=\frac{1}{g_{n,d}}\Tr_{R^n}P_{\Sym^n(RA')}.
 \label{eq:postselection-marginal}
\end{align}
Every permutation-invariant density operator $\rho_n$ on ${A'}^n$
satisfies $\rho_n\leq g_{n,d}T_{n,d}$. Moreover,
\begin{align}
 T_{n,d}=\sum_{i=1}^{L}q_i\tau_i^{\ox n},\qquad
 L\leq g_{n,d},\qquad
 [T_{n,d},\tau^{\ox n}]=0\quad\text{for every density operator }\tau,
 \label{eq:postselection-marginal-ensemble}
\end{align}
where $\{q_i\}$ is a probability distribution and $\tau_i\in\cD(A')$.
\end{lemma}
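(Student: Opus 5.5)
We need to prove three claims in Lemma~\ref{lem:postselection-marginal}: the domination $\rho_n\leq g_{n,d}T_{n,d}$, the finite decomposition of $T_{n,d}$ into at most $g_{n,d}$ tensor powers, and the commutation $[T_{n,d},\tau^{\ox n}]=0$ for every density operator $\tau$.

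\emph{Domination.} By Lemma~\ref{lemma: symmetric purification}, a permutation-invariant $\rho_n$ has a purification $|\Psi\rangle\in\Sym^n(RA')$. Since $|\Psi\rangle$ is a unit vector in that subspace, $|\Psi\rangle\langle\Psi|\leq P_{\Sym^n(RA')}$. Tracing out $R^n$ is positive, so $\rho_n\leq\Tr_{R^n}P_{\Sym^n(RA')}=g_{n,d}T_{n,d}$.

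\emph{Structure of $T_{n,d}$.} By Lemma~\ref{lemma: post-selection} with $\mathcal H=RA'$ and $D=d^2$,
\[
T_{n,d}=\Tr_{R^n}\int\varphi^{\ox n}d\varphi=\int(\Tr_R\varphi)^{\ox n}d\varphi .
\]
This exhibits $T_{n,d}$ as a mixture of tensor powers $\tau^{\ox n}$ of reduced states. Next, $P_{\Sym^n(RA')}$ commutes with $(U_R\ox V_{A'})^{\ox n}$ for all unitaries, and the partial trace over $R^n$ is invariant under conjugation by $U_R^{\ox n}$. Hence $T_{n,d}$ commutes with $V^{\ox n}$ for every unitary $V$ on $A'$.

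It also commutes with every permutation. By Schur--Weyl duality, $T_{n,d}$ therefore lies in the commutant of both the unitary and the permutation actions. That commutant is the center $\bigoplus_\lambda c_\lambda P_\lambda$, where $P_\lambda$ are the isotypic projectors. These $P_\lambda$ commute with every $X^{\ox n}$, in particular with $\tau^{\ox n}$, which proves the commutation claim.

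\emph{Finite decomposition.} This is the main obstacle: we need a Carath\'eodory bound $L\leq g_{n,d}$, rather than the cruder $g^2$ used in Section~\ref{sec:coherent-bound}. My first attempt is to work in the real vector space spanned by the operators $\tau^{\ox n}$, i.e.\ the Hermitian part of the image of the symmetric subspace of operators, and compute its dimension. The integral lies in the convex hull of the compact set $\{\tau^{\ox n}\}$. Carath\'eodory then gives $L\leq\dim+1$, where $\dim$ is the dimension of the affine span of trace-one elements.

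If that dimension is too large, I would exploit the fact that $T_{n,d}$ lies in the center, which has dimension equal to the number of Young diagrams with at most $d$ rows. I would use unitary averaging to choose diagonal $\tau_i$ (commuting, so that $\tau_i^{\ox n}$ are simultaneously diagonal). With diagonal $\tau=\mathrm{diag}(p)$, the operators $\tau^{\ox n}$ span a space indexed by types, whose dimension is $\binom{n+d-1}{d-1}$. Matching the center, a vector of weights per irrep, then needs at most (number of partitions)$+1\leq\binom{n+d-1}{d-1}\leq g_{n,d}$ points.

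Concretely, I would write $T_{n,d}=\int\int(V\Lambda V^\dagger)^{\ox n}$, replace it by the unitary-twirl-invariant characterization, and show that the diagonal mixture $\int\Lambda^{\ox n}$, after twirling, reproduces the center coefficients. A twirl, however, reintroduces an integral. So I would instead directly solve for a finite mixture of diagonal $\Lambda_i^{\ox n}$ together with a finite unitary design (a finite set of unitaries forming an $n$-design on $A'$) matching each $c_\lambda$. Verifying that this count stays below $g_{n,d}$ is the step I expect to require the most care.
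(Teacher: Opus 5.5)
The domination step is the same as the paper's. Your commutation argument is also correct, by a different route: you place $T_{n,d}$ in the center $\bigoplus_\lambda c_\lambda P_\lambda$ via Schur--Weyl duality. The paper argues more elementarily. It notes that $(\1_R\ox\tau)^{\ox n}$ is Hermitian and maps $\Sym^n(RA')$ into itself, so it commutes with $P_{\Sym^n(RA')}$, and then takes the partial trace over $R^n$.

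The genuine gap is the bound $L\leq g_{n,d}$, which you leave open. Your ``first attempt'' is exactly the right argument, and the dimension you feared might be too large is precisely what is needed. Every $\tau^{\ox n}$ is a trace-one permutation-invariant Hermitian operator on ${A'}^n$. The real space of permutation-invariant Hermitian operators is the $n$th symmetric power of the $d^2$-dimensional real space of Hermitian operators on $A'$. A basis is given by symmetrized tensor products of a Hermitian basis, indexed by multisets of size $n$ from $d^2$ labels, so this space has dimension exactly $\binom{n+d^2-1}{d^2-1}=g_{n,d}$. Its trace-one slice is an affine space of dimension $g_{n,d}-1$. Since $T_{n,d}=\int(\Tr_R\varphi)^{\ox n}d\varphi$ lies in the convex hull of the compact set $\{\tau^{\ox n}\}$, Carath\'eodory gives $L\leq g_{n,d}$ at once. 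This is the paper's proof.

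The fallback you sketch does not establish the bound. A mixture of $\Lambda_i^{\ox n}$ with all $\Lambda_i$ diagonal in one fixed basis is diagonal in the product basis, but $T_{n,d}$ is not. For instance, $T_{2,d}=(d\1+F)/(d(d^2+1))$ with $F$ the swap on ${A'}^{2}$, which has off-diagonal entries. So the center-matching count (Young diagrams, or types $\binom{n+d-1}{d-1}$) counts \emph{twirled} operators $\int(U\Lambda U^\dagger)^{\ox n}dU$, and each of these is still an integral rather than a single tensor power. Replacing the twirls by a finite unitary $n$-design makes the number of terms equal to the design size times the number of diagonal states. Nothing in your plan bounds that product by $g_{n,d}$, so the chain (number of partitions)$+1\leq\binom{n+d-1}{d-1}\leq g_{n,d}$ does not bound $L$. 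The fallback is in any case unnecessary once the symmetric-power dimension count above is done.
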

\begin{proof}
Apply the symmetric purification and postselection lemmas used in
Section~\ref{sec:coherent-bound}
(Lemmas~\ref{lemma: symmetric purification} and~\ref{lemma: post-selection})
and trace out $R^n$. This gives $\rho_n\leq g_{n,d}T_{n,d}$ and
\[
 T_{n,d}=\int(\Tr_R\varphi_{RA'})^{\ox n}\,d\varphi,
\]
where the integral is over single-copy pure states with Haar probability
measure~\cite{post-selection2009}.
Apply Carath\'{e}odory's theorem to this marginal integral.
To count the required terms, choose a Hermitian basis of the $d^2$-dimensional
real space of Hermitian operators on $A'$. Symmetrized tensor products of
these basis elements are indexed by the multisets of size $n$ from $d^2$
elements. They form a basis of the permutation-invariant Hermitian
operators, so that space has dimension $g_{n,d}$. Its trace-one affine
subspace has dimension $g_{n,d}-1$, giving $L\leq g_{n,d}$.
This applies Carath\'{e}odory after tracing out the reference, which sharpens
the sufficient $g_{n,d}^2$ term count used in Section~\ref{sec:coherent-bound}.

Finally, fix $\tau\in\cD(A')$ and write
$X=\1_R\ox\tau$ and $P=P_{\Sym^n(RA')}$.
Applying the same operator to each $RA'$ factor preserves symmetry,
so $X^{\ox n}$ preserves $\Sym^n(RA')$. Since $X^{\ox n}$ is Hermitian,
it also preserves the orthogonal complement: for
$|s\rangle\in\Sym^n(RA')$ and
$|a\rangle\in(\Sym^n(RA'))^\perp$,
\[
 \langle s|X^{\ox n}|a\rangle
 =\langle X^{\ox n}s\mid a\rangle=0,
\]
because $X^{\ox n}|s\rangle\in\Sym^n(RA')$.
Thus $PX^{\ox n}=X^{\ox n}P$.
Regrouping the tensor factors gives
$X^{\ox n}=\1_{R^n}\ox\tau^{\ox n}$. Taking the partial trace over $R^n$
and pulling out the operators acting on ${A'}^n$, we obtain
\begin{align}
 [T_{n,d},\tau^{\ox n}]
 &=\frac{1}{g_{n,d}}\Tr_{R^n}
       \bigl[P,\1_{R^n}\ox\tau^{\ox n}\bigr]=0.
 \label{eq:postselection-commutation}
\end{align}
Also $T_{n,d}\geq\1_{{A'}^n}/(g_{n,d}d^n)>0$, since the tensor
power of a normalized maximally entangled vector belongs to
$\Sym^n(RA')$ and its projector is bounded by the symmetric projector.
\end{proof}

\begin{lemma}
\label{lemma: H_alpha for rho<gamma rho'}
\label{lemma: D < max_y D_y}
For states $\rho,\rho'$ with $\rho\leq c\rho'$ and $\alpha>1$,
\begin{align}
 \widetilde D_\alpha(\rho\mids\tau)
 &\leq\widetilde D_\alpha(\rho'\mids\tau)
              +\frac{\alpha}{\alpha-1}\log c.
 \label{eq:order-inequality}
\end{align}
For a probability distribution $q_j$, states $\rho_j$, and positive
operators $\tau_j$,
\begin{align}
 &\widetilde D_\alpha\left(\sum_jq_j\rho_j\ox|j\rangle\langle j|
            \mids\sum_jq_j\tau_j\ox|j\rangle\langle j|\right)\nonumber\\
 &\qquad=\frac{1}{\alpha-1}\log\sum_jq_j
       2^{(\alpha-1)\widetilde D_\alpha(\rho_j\mids\tau_j)}
 \leq\max_j\widetilde D_\alpha(\rho_j\mids\tau_j).
 \label{eq:flag-inequality}
\end{align}
\end{lemma}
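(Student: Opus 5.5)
The plan is to argue directly from the trace formula \eqref{eq:sandwiched-definition}. Write $\gamma=\frac{\alpha-1}{2\alpha}$ and
\[
 \widetilde Q_\alpha(\rho\mids\tau)=\Tr\bigl(\tau^{-\gamma}\rho\,\tau^{-\gamma}\bigr)^\alpha
 =\bigl\|\tau^{-\gamma}\rho\,\tau^{-\gamma}\bigr\|_\alpha^\alpha ,
\]
so that $\widetilde D_\alpha=\frac{1}{\alpha-1}\log\widetilde Q_\alpha$. Since $\alpha>1$, the map $x\mapsto\frac{1}{\alpha-1}\log x$ is increasing. Hence both claims reduce to statements about $\widetilde Q_\alpha$.

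\emph{Order inequality.} First I handle supports. If $\operatorname{supp}\rho'\not\subseteq\operatorname{supp}\tau$, the right side of \eqref{eq:order-inequality} is $+\infty$ and there is nothing to prove. Otherwise $\rho\le c\rho'$ gives $\operatorname{supp}\rho\subseteq\operatorname{supp}\rho'\subseteq\operatorname{supp}\tau$, so both sides are finite. The congruence $X\mapsto\tau^{-\gamma}X\tau^{-\gamma}$, with the inverse power taken on the support, preserves operator order. Therefore
\[
 0\le\tau^{-\gamma}\rho\,\tau^{-\gamma}\le c\,\tau^{-\gamma}\rho'\tau^{-\gamma}.
\]
The Schatten $\alpha$-norm is monotone on positive operators, since $0\le X\le Y$ implies $\lambda_k^\downarrow(X)\le\lambda_k^\downarrow(Y)$ by Weyl monotonicity. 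This yields $\widetilde Q_\alpha(\rho\mids\tau)\le c^\alpha\widetilde Q_\alpha(\rho'\mids\tau)$. Taking $\frac{1}{\alpha-1}\log$ gives exactly the additive term $\frac{\alpha}{\alpha-1}\log c$.

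\emph{Flag identity.} First I discard indices with $q_j=0$, since they contribute nothing to either operator. Both operators are block diagonal in the flag basis, so the functional calculus acts blockwise:
\[
 \Bigl(\sum_jq_j\tau_j\ox\proj{j}\Bigr)^{-\gamma}=\sum_jq_j^{-\gamma}\tau_j^{-\gamma}\ox\proj{j}.
\]
The sandwiched operator is then $\sum_j q_j^{1-2\gamma}\,\tau_j^{-\gamma}\rho_j\tau_j^{-\gamma}\ox\proj j$, and $1-2\gamma=1/\alpha$. Raising to the power $\alpha$ and tracing gives
\[
 \widetilde Q_\alpha=\sum_jq_j\,\widetilde Q_\alpha(\rho_j\mids\tau_j)=\sum_jq_j\,2^{(\alpha-1)\widetilde D_\alpha(\rho_j\mids\tau_j)},
\]
which is the stated equality. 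If some $\operatorname{supp}\rho_j\not\subseteq\operatorname{supp}\tau_j$ with $q_j>0$, both sides are infinite. Finally, the average is at most the maximum of the exponentials. Monotonicity of $\frac{1}{\alpha-1}\log$ for $\alpha>1$ then gives the bound by $\max_j\widetilde D_\alpha(\rho_j\mids\tau_j)$.

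\emph{Main obstacle.} There is no deep obstacle here. The only points that need care are the support conventions for the negative powers, and checking that the exponent bookkeeping $q_j^{\alpha(1-2\gamma)}=q_j$ comes out exactly. The latter is what makes the flag weights reappear linearly.
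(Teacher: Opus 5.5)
Your proposal is correct and follows essentially the same route as the paper: conjugate $\rho\le c\rho'$ by $\tau^{(1-\alpha)/(2\alpha)}$ and use monotonicity of the Schatten $\alpha$-norm on positive operators for the order inequality, then evaluate the $\alpha$th power block by block with the weight bookkeeping $q_j^{\alpha}q_j^{1-\alpha}=q_j$ (your $q_j^{\alpha(1-2\gamma)}=q_j$) for the flag identity. Your explicit treatment of the support conditions and of zero-weight indices is a harmless refinement of details the paper leaves implicit.
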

\begin{proof}
For the first inequality, conjugate the order relation by
$\tau^{(1-\alpha)/(2\alpha)}$ and use monotonicity of the Schatten norm
on positive operators. For the second, evaluate the trace of the
$\alpha$th power block by block. The weights contribute
$q_j^\alpha q_j^{1-\alpha}=q_j$; their sum is one.
\end{proof}

\setstretch{1.0}
\bibliography{lib}
\end{document}